\theoremstyle{plain}
\newtheorem{theorem}{Theorem}
\newtheorem{assumption}[theorem]{Assumption}
\newtheorem{corollary}[theorem]{Corollary}
\newtheorem{definition}[theorem]{Definition}
\newtheorem{lemma}[theorem]{Lemma}
\newtheorem{remark}[theorem]{Remark}
\newcommand{\MAH}{M_2^0}
\newcommand{\bee}{\begin{equation}}
\newcommand{\eee}{\end{equation}}
\newcommand{\atn}{a_{\text{\tiny TN}}}
\newcommand{\btn}{b_{\text{\tiny TN}}}
\newcommand{\ctn}{c_{\text{\tiny TN}}}
\newcommand{\RR}{\mathbb{R}}
\newcommand{\ZZ}{\mathbb{Z}}
\newcommand{\CC}{\mathbb{C}}
\newcommand{\cmin} {c_{\text{\tiny min}}}
\newcommand{\VYMH}{V_{\text{\tiny YMH}}}
\newcommand{\HYMH}{H_{\text{\tiny YMH}}}
\newcommand{\tHYMH}{\tilde{H}_{\text{\tiny YMH}}}
\newcommand{\U}{\xi}
\newcolumntype{Z}{ >{\centering\arraybackslash}X }
\begin{document} 

\vskip 10pt
\baselineskip 28pt

\begin{center}
{\Large \bf Quantum Bound States in Yang-Mills-Higgs Theory }

\baselineskip 18pt

\vspace{0.4 cm}

{\bf Lyonell Boulton, Bernd J~Schroers and Kim Smedley-Williams}\\
\vspace{0.2 cm}
Maxwell Institute for Mathematical Sciences and
Department of Mathematics,
\\Heriot-Watt University,
Edinburgh EH14 4AS, UK. \\
{\tt l.boulton@hw.ac.uk}, \;\;  {\tt b.j.schroers@hw.ac.uk}  and {\tt ks275@hw.ac.uk}

\vspace{0.4cm}

{20  August  2018 }

\end{center}

\begin{abstract}
\noindent We give rigorous proofs of  the existence of infinitely many (non-BPS) bound states for two linear operators associated with the Yang-Mills-Higgs equations at vanishing Higgs self-coupling and  for gauge group $SU(2)$: the operator obtained by linearising the Yang-Mills-Higgs equations around a charge one monopole and the Laplace operator on the Atiyah-Hitchin moduli space of centred charge two monopoles. For the linearised system we use the  Riesz-Galerkin approximation to compute upper bounds on the lowest 20 eigenvalues. We discuss the similarities in the spectrum of the linearised  system and the Laplace operator, and interpret them in the light of electric-magnetic duality conjectures.
\end{abstract}

\baselineskip 14pt
\parskip 1 pt
\parindent 10pt

\section{Introduction}
Developments in mathematical physics over the past decades have amply demonstrated that, in order to fully understand a quantum field theory, one also needs to study the corresponding classical field theory and its solutions. The simplest case, where one quantises and studies fluctuations about the vacuum,  turns out to be a rather special one; in general there are other sectors where the vacuum is replaced by non-trivial classical field configurations, often characterised by the  non-vanishing of a topological invariant. 

This is particularly true for gauge theories like Yang-Mills and Yang-Mills-Higgs (YMH) theories. In the latter, the non-trivial classical field configurations are non-abelian monopoles. They satisfy  the non-linear YMH field equations and carry non-vanishing winding numbers  which physically manifest themselves as magnetic charges. 

In this paper we are interested in   two spectral problems which arise in the study of fluctuations around  monopole fields but which, {\em a priori}, are not related in any obvious way.  

One of them is obtained by linearising the YMH equations around a static monopole solution. The second has its origin in a more subtle and intricate feature of the YMH system in the so-called BPS limit, where the Higgs self-coupling vanishes. In that limit there exists a whole manifold of solutions, called the moduli space, which inherits a Riemannian metric from the YMH kinetic energy functional \cite{AHbook}. Associated to this metric there are  naturally defined linear operators like the Laplace-Beltrami, Laplace-de Rham and Dirac operators on the moduli space, with interesting spectral properties.

Examples of both sorts of problem have been studied in the  physics literature, for example in the papers \cite{BT,RS,FV} for the YMH system linearised about a single monopole and in \cite{GM,M,S} for the Laplace-Beltrami operator on the moduli space of two centred monopoles, also called the Atiyah-Hitchin manifold. A combination of analytical and numerical methods has revealed a host of interesting spectral phenomena, including infinitely many Coulomb-like bound states and Fesh\-bach resonances, but there are very few mathematically rigorous results in this respect. 
 
The primary purpose of this paper is to begin to fill this gap by applying techniques of spectral analysis to the linearised YMH equation and the Atiyah-Hitchin Laplacian. We prove the existence of infinitely many bound states in both systems and give upper bounds on the eigenvalues for the linearised YMH equations. In this way we also hope to introduce two interesting problems to the spectral analysis community and, conversely, useful analytical techniques to the community of theoretical physicists interested in magnetic monopoles. 
 
A further motivation of this paper is to exhibit the striking similarity in the spectra of two operators which superficially look very different: the linearised YMH equation is defined on Euclidean three-space while the Atiyah-Hitchin Laplacian is defined on a non-compact Riemannian four-manifold.
The similarities in their spectra are probably related to electric-magnetic duality conjectures in YMH quantum field theory, but the details are far from clear. We end this extended introduction with a summary of relevant background on duality conjectures.  

YMH theory involves an $SU(2)$ gauge field and a Higgs field in the adjoint representation. The symmetry is spontaneously broken to $U(1)$ either via a boundary condition on the Higgs field (in the BPS limit) or by a Higgs potential (at generic coupling). In the BPS limit and in suitable units, the perturbative particle spectrum after symmetry breaking consists of a photon, a massless Higgs scalar and massive W-bosons with equal and opposite  electric charges $n=\pm 1$.  

In addition to the perturbative particles, the theory contains solitonic magnetic monopoles \cite{tHooft, Polyakov}, labelled by an integer magnetic charge $m$. When allowed to evolve in time, classical magnetic monopoles may acquire electric charge, thus becoming dyons (particles with both magnetic and electric charge). After quantisation, the electric charge is characterised by another integer $n$ so that states in quantum YMH theory fall into different sectors labelled by a pair of integer charges $(m, n)$. A magnetic monopole belongs to the sector $(1,0)$, a W-boson to the sector $(0,1)$ or $(0,-1)$, the simplest dyon to the sector $(1,1)$ and so on. 
 
Electric-magnetic duality conjectures relate the properties of sectors with different magnetic and electric charges. In the simplest version, due to Montonen and Olive \cite{MO}, a sector with label $(m,n)$ is conjectured to be equivalent, in a suitable sense, to the sector $(n,-m)$, with electric and magnetic charge being exchanged. 
In the more general S-duality conjecture \cite{Sen}, sectors related by an  $SL(2,\mathbb{Z})$ action are conjectured to be equivalent. This applies in particular to the $SL(2,\mathbb{Z})$ orbit of the $W$-boson sector $(0,1)$, which includes all sectors $(m,n)$ with co-prime integers $m$ and $n$. 

For various reasons, S-duality can, at best, hold in supersymmetric versions of YMH quantum field  theory \cite{Osborn}. The evidence to support the conjecture has therefore mostly come from the consideration of BPS quantum states, whose energy can be computed exactly even with semiclassical or perturbative methods since higher order corrections vanish on account of the supersymmetry~\cite{Sen}.

Returning to the linear spectral problems addressed in the current paper, we note that  both the linear fluctuations around a monopole and eigenmodes of the Laplace operator on the moduli space are, in fact, semi-classical approximations of dyonic states in quantum YMH theory. In particular, after quantisation of an angular collective coordinate which we review in Sect.~\ref{dyons}, fluctuations around a single monopole ($m=1$) may describe states with arbitrary electric charge $n\in \ZZ $. Eigenstates of the Laplace operator on the moduli space of two monopoles ($m=2$) may have arbitrary electric charge $n\in \mathbb{Z}$. Since $(1,n)$, $n\in \ZZ$ and $(2,n)$, $n$ odd, all lie on the $SL(2,\mathbb{Z})$-orbit consisting of co-prime pairs $(m,n)$, the corresponding sectors of YMH theory are related by S-duality. 
  
Since we are not working in an explicitly supersymmetric setting and are looking at bound states which are not of the BPS-type, we do not expect the spectra in these sectors to be related in any simple way. Nonetheless, we find striking qualitative similarities. At the end of this paper, we will discuss them in the light of duality conjectures. 

The paper is organised as follows. In Sect.~2 we establish general results on a self-adjoint extension and the number of bound states of a class of second order differential operators on the half-line $0\leq \rho <\infty$. The class of Schr\"odinger operators we consider contains  a Calogero ($1/\rho^2$) potential  near $\rho=0$ and a Coulombic  ($1/\rho$) tail for $\rho\rightarrow \infty$. The technical assumptions we make are designed to cover the radial  operators obtained from the linearised YMH equations and the Laplace operator on the Atiyah-Hitchin manifold after separation of variables, and require a generalisation of results available in the literature. In Sect.~3 we apply our results to a channel of the linearised YMH equations previously studied in \cite{RS} and \cite{BT}, and prove the existence of infinitely many bound states. By means of suitable trial functions, we give numerical upper bounds for the lowest 20 eigenvalues. In Sect.~4 we turn to the Laplace operator on the Atiyah-Hitchin manifold. Following \cite{M} and \cite{S} to separate radial and angular variables, we focus on three single channels of the Laplace operator. We prove the existence of infinitely many bound states in each of them, compute the eigenvalues numerically and compare with previous numerical results in the literature. Sect.~5 contains a discussion of our results and our conclusions. 


\section{Coulombic bound states on the half-line} \label{generalresults}

Our strategy for showing that the linearised YMH equations around the BPS monopole and the Laplace operator on the Atiyah-Hitchin manifold have infinitely many eigenvalues is to find an orthonormal family of states with energy below the bottom of the essential spectrum. In both cases the associated Hamiltonians reduce to one-dimensional Schr{\"o}dinger operators on the half-line whose potentials have, as leading terms, a combination of a Calogero and a Coulombic potential.  

Schr{\"o}dinger operators on the half-line have been studied extensively in the literature since they arise as the radial part of  Schr{\"o}dinger operators in two- or three-dimensional Euclidean space, see  for example \cite[Appendix to X.I]{RS2}. Denoting the radial coordinate by $\rho$  (the more conventional $r$ is reserved for a different radial coordinate below), the identity
\begin{equation}
\label{radlap}
-\frac{1}{\nu^2} \partial_{\rho} \nu^2 \partial_{\rho} = \frac{1}{\nu }\left(-\partial_{\rho}^2 + \frac{\nu''}{\nu}\right)\nu, 
\end{equation}
for an arbitrary non-vanishing and differentiable function $\nu$ of ${\rho}$, implies that, in any dimension, one can bring the radial derivative appearing in the Laplace operator on Euclidean space into the `flat' form $\partial^2_{\rho}$ at the expense of introducing the effective potential $\nu''/\nu$. 
In the most familiar three-dimensional case, $\nu({\rho})={\rho}$, and the effective potential vanishes. In two dimensions, however, $\nu({\rho})=\sqrt{{\rho}}$, and so the effective potential is the attractive Calogero potential $-1/4{\rho}^2$. 

The unusual feature of the potentials we encounter in this paper is that they combine small ${\rho}$ behaviour which is typical of two-dimensional problems (involving attractive Calogero potentials) with large ${\rho}$ behaviour which is characteristic of three dimensions (for example a $1/{\rho}$ Coulombic potential and a repulsive or `centrifugal' Calogero potential). In this section we establish the framework for studying   the selfadjointness and the spectrum of Hamiltonians of this type.

We write 
\begin{equation}
\left\langle f,g \right\rangle=\int_{0}^{\infty}f({\rho})\overline{g({\rho})}\mathrm{d}{\rho}
\end{equation}
for  the inner product of the space $L^2(0,\infty)$ and  
\begin{equation}
\left\|f\right\|=\sqrt{\left\langle f,f\right\rangle}
\end{equation}
for the corresponding norm. We are interested in potentials on the open half-line which are  real continuous and  have the following  behaviour near  $0$ and $\infty$: 
\begin{equation}   \label{gencondV}
V({\rho})=\begin{cases} \frac{c_2}{{\rho}^{2}}+\operatorname{O}(1), & {\rho}\to 0, \\
C_0+\frac{C_1}{{\rho}}+\operatorname{o}\left(\frac{1}{\rho}\right), & {\rho}\to \infty, \end{cases}
\end{equation}
for  a constant $c_2\in \mathbb{R}$ characterising the asymptotics for small ${\rho}$ and    constants $C_0,C_1\in \mathbb{R}$ characterising the asymptotics for  large ${\rho}$.

Let the differential expression corresponding to the Hamiltonian be given by
\begin{equation}
\label{hamdef}
\tilde{H}=-\frac{d^{2}}{d{\rho}^{2}}+V(\rho),
\end{equation}
with domain $C^{\infty}_{0}(0,\infty)$.
Then $\tilde{H}$ is a densely defined symmetric operator acting on $L^2(0,\infty)$. Below we will fix a specific selfadjoint extension of $\tilde{H}$. We begin by determining conditions on $V$ which are sufficient to ensure that $\tilde{H}$ is semi-bounded.

\begin{lemma} 
Suppose  the potential $V$ in \eqref{hamdef}  can be written as 
$V({\rho})=\frac{c_2}{{\rho}^2}+W({\rho})$ where  $c_2\geq -\frac14$ and the function  $W:[0,\infty)\rightarrow \mathbb{R}$  is bounded.   Then the symmetric operator $\tilde{H}$ is semi-bounded below.
\label{lemma1}
\end{lemma}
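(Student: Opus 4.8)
The plan is to reduce the claim to the classical (sharp) Hardy inequality on the half-line. First I would write out the quadratic form of $\tilde H$ on its dense domain $C^\infty_0(0,\infty)$. Since every such $f$ has compact support bounded away from both $0$ and $\infty$, integration by parts produces no boundary terms, and hence
\[
\left\langle \tilde H f, f\right\rangle \;=\; \|f'\|^2 \;+\; c_2\int_0^\infty \frac{|f(\rho)|^2}{\rho^2}\,\mathrm{d}\rho \;+\; \int_0^\infty W(\rho)\,|f(\rho)|^2\,\mathrm{d}\rho .
\]
The last term is immediately controlled by the boundedness of $W$: setting $M=\sup_{\rho\geq 0}|W(\rho)|<\infty$, it is bounded below by $-M\|f\|^2$. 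So the whole problem is to show that the kinetic-plus-Calogero part $\|f'\|^2 + c_2\int_0^\infty \rho^{-2}|f|^2\,\mathrm{d}\rho$ stays non-negative.

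The heart of the argument is the sharp Hardy inequality
\[
\int_0^\infty |f'(\rho)|^2\,\mathrm{d}\rho \;\geq\; \frac14\int_0^\infty \frac{|f(\rho)|^2}{\rho^2}\,\mathrm{d}\rho, \qquad f\in C^\infty_0(0,\infty),
\]
in which the constant $1/4$ is optimal. If $c_2\geq 0$ the Calogero term is already non-negative and nothing from Hardy is needed. If $-\tfrac14\leq c_2<0$, I would write $c_2=-\alpha$ with $0<\alpha\leq\tfrac14$, so that Hardy's inequality gives
\[
\|f'\|^2 + c_2\int_0^\infty \frac{|f|^2}{\rho^2}\,\mathrm{d}\rho \;\geq\; \left(\frac14-\alpha\right)\int_0^\infty \frac{|f|^2}{\rho^2}\,\mathrm{d}\rho \;\geq\; 0 .
\]
In either case the kinetic-plus-Calogero part is non-negative, and combining this with the lower bound on the $W$ term yields $\left\langle \tilde H f, f\right\rangle \geq -M\|f\|^2$ for all $f\in C^\infty_0(0,\infty)$, which is exactly semi-boundedness below, with explicit lower bound $-M=-\sup|W|$.

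The only delicate point is the threshold $c_2=-\tfrac14$, where the constant in Hardy's inequality is saturated: the argument survives precisely because the sharp constant is exactly $1/4$, so $\tfrac14-\alpha=0$ still leaves a non-negative rather than a negative contribution. I expect getting this borderline case right (and hence recording a self-contained proof of the sharp Hardy inequality, e.g.\ by completing the square in $\int_0^\infty |f'-\tfrac{1}{2\rho}f|^2\,\mathrm{d}\rho\geq 0$ and integrating the cross term by parts, the boundary terms vanishing by compact support) to be the main thing to handle carefully; the rest is routine.
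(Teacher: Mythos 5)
Your proposal is correct and follows essentially the same route as the paper: the paper also reduces the claim to Hardy's inequality on the half-line (cited from Davies) applied on $C^\infty_0(0,\infty)$, after absorbing $W$ into a constant (the paper uses $\inf W$ where you use $-\sup|W|$, an immaterial difference), and the borderline case $c_2=-\tfrac14$ is handled identically via the sharpness of the constant $\tfrac14$. Your optional completing-the-square derivation of Hardy is a fine self-contained substitute for the paper's citation.
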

\begin{proof}
Let
\bee
      \cmin=\inf_{0\leq {\rho}<\infty} W({\rho}) >-\infty.
\eee
so that 
\bee
      V({\rho})\geq \frac{c_2}{{\rho}^2}+\cmin \qquad \forall {\rho}>0.
\eee
Consider $u \in C^\infty_0(0,\infty)$.  By virtue of Hardy's inequality \cite[Lemma~5.3.1]{EBD}, we deduce
\begin{equation}
\langle (\tilde{H}-\cmin)u,u\rangle \geq \int_0^\infty
\left(|u'({\rho})|^2-\frac{|u({\rho})|^2}{4{\rho}^2}\right)d{\rho} +  \int_0^\infty
\frac{|u({\rho})|^2}{{\rho}^2}\left(\frac14+c_2\right)d{\rho}\geq \int_0^\infty
\frac{|u({\rho})|^2}{{\rho}^2}\left(\frac14+c_2\right)d{\rho}.
\end{equation} 
The condition $c_2\geq -\frac14$ ensures that the right hand side is non-negative. 
\end{proof}

\begin{remark}
If the potential $V$ in \eqref{hamdef} can be written in the slightly more general form 
\begin{equation*}
      V({\rho})=\frac{c_2}{{\rho}^2}+\frac{c_1}{{\rho}}+W({\rho}), \qquad c_1,c_2\in \mathbb{R}, 
\end{equation*}
with $W$ as in Lemma 1, 
then the corresponding operator $\tilde{H}$ is also semi-bounded below 
if either  $c_2>-1/4$ and $-\infty<c_1<\infty$ or $c_2=-1/ 4$ and $c_1\geq0$.
As we will only be concerned with the case $c_1=0$ in our applications below, we omit the proof of this statement, which is an elementary extension of the proof given for Lemma 1. 
\end{remark}

Let us now turn to the question of selfadjoint extensions of $\tilde H$. We follow \cite{BDG}, \cite[\S7.2.3]{GTV}, \cite{2010Duclosetal} and \cite{2010Oliveira}.  We will identify and fix such a selfadjoint extension which, depending on the parameters occurring in $V$ particularly on  $c_2$, may or may not be unique. Throughout we suppose that the potential satisfies the following.

\begin{assumption} \label{condpot}
  The potential  $V:(0,\infty)\rightarrow \mathbb{R}$ can be written in terms of a continuous and bounded  function  $W:[0,\infty)\rightarrow \mathbb{R}$ as 
\begin{equation}
\label{VW}
      V({\rho})=\frac{c_2}{{\rho}^2}+W({\rho}), 
\end{equation}
for some constant $c_2\geq-\frac14$, where 
  \begin{enumerate}
  \item \label{cond1} the limit $\lim_{{\rho}\rightarrow\infty}W({\rho})=:C_0$ exists and
  \item \label{cond2}   $\int_1^\infty (W({\rho})-C_0)^2 d{\rho}<\infty$.
  \end{enumerate}  
\end{assumption}

We now set 
\begin{equation} \label{defc1}
c_2 = m^2-\frac 1 4, \quad   m \geq 0,
\end{equation}
 and denote the Hamiltonian which comprises the leading term of $\tilde{H}$ for small ${\rho}$ by 
\begin{equation}
\tilde{H}_0 = -\frac{d^{2}}{d{\rho}^{2}}+\frac{m^2-\frac 1 4 }{{\rho}^2}, \quad m \geq 0,
\end{equation}
in the same domain $C^\infty_0(0,\infty)$. Then, $\tilde{H}_0$ is essentially selfadjoint for $m\geq 1$ and has deficiency indices $(1,1)$ for  $ 0\leq  m < 1$. We briefly review the reason for this and explain why one of the extensions for $ 0\leq m < 1$ is natural in the present context. 

As already mentioned after \eqref{radlap}, the differential operator $\tilde{H}_0$ arises  as the radial part of the Laplacian on the two-dimensional Euclidean space $\mathbb{R}^{2}$. For Cartesian coordinates $(x_1,x_2)$, and polar coordinates ${\rho}=\sqrt{x_1^2+x_2^2}\in (0,\infty)$ and $\varphi\in [0,2\pi)$,  we explicitly have 
\begin{equation}
\Delta=\frac{ \partial_1^2}{\partial x_1^2}+ \frac{\partial^2}{\partial x_2^2} = \frac{1}{{\rho}} \partial _{{\rho}} {\rho} \partial_{\rho}+\frac{1}{{\rho}^{2}}\partial_{\varphi}^{2}.
\end{equation}  
Therefore solving  $ 
-\Delta\psi =0$               
with the ansatz
\begin{equation}
\label{radanz}
\psi= u({\rho}) e^{im\varphi}
\end{equation}
leads to the radial equation
\begin{equation}
-\frac{1}{{\rho}}\left({\rho}u'\right)' +\frac{m^2}{{\rho}^2} u =0.
\end{equation}
Here $m$ has to be an integer for \eqref{radanz} to be a single-valued function, but this does not affect the essence of the following argument and we will continue to assume that $m$ is a non-negative real number.
The identity \eqref{radlap} shows that the new radial function
\begin{equation}
\label{etau}
\eta({\rho}) =\sqrt{{\rho}} u({\rho})
\end{equation}
then satisfies
\begin{equation}
\label{H0eq}
\tilde{H_0}\eta = 0.  
\end{equation}
 A basis of solutions for this equation is given by
\begin{align}
\label{fundamental}
\eta_1({\rho}) = {\rho}^\frac  12 ,\; \eta_2({\rho})= {\rho}^\frac 12 \ln ({\rho})  \qquad & \text{if} \quad m=0, \nonumber \\
\eta_1({\rho}) = {\rho}^{\frac 12 + m} , \; \eta_2({\rho})= {\rho}^{\frac 1 2 -m} \quad & \text{if} \quad m>0.
\end{align}
When $m \geq 1$ only one of these solutions is square integrable with respect to $d{\rho}$ near ${\rho}=0$, and so ${\rho}=0$ is a limit point for $\tilde{H}_0$ in that case.  

When $0\leq m <1$, however, both $\eta_1$ and $\eta_2$ are square integrable with respect to $d{\rho}$ near ${\rho}=0$,  so that ${\rho}=0$ is a limit circle for the differential expression $\tilde{H}_0$, and an additional boundary condition is required. Both functions $u$ obtained from the fundamental solutions \eqref{fundamental} via \eqref{etau} in this case are square integrable with respect to the radial measure ${\rho}d{\rho}$ on $\mathbb{R}^2$, but  $u_2({\rho}) =\eta_2({\rho})/\sqrt{{\rho}}$  is singular at ${\rho}=0$ and does 
not lie in the domain of the Laplacian. Thus, the requirement that solutions are in the domain of $\Delta$ naturally provides the boundary condition that we set below.

Due to the asymptotic behaviour of the potential, $\rho=\infty$ is a limit point for all values of $m$. We thus fix a selfadjoint extension of $\tilde{H}_0$ as follows. Let $\zeta\in C^\infty([0,\infty))$ be such that $\zeta(\rho)=1$ for $\rho \leq 1$ and  $\zeta(\rho)=0$ for $\rho\geq 2$ and set $\zeta_m(\rho)=\zeta(\rho)\rho^{\frac12 + m}$. Let $\overline{\tilde{H}_0}$ be the closure of the operator $\tilde{H}_0$, the minimal operator associated to $\tilde{H}_0$, and let
\bee
\label{domdef}
   \mathcal{D}=\mathrm{D}\left(\overline{\tilde{H}_0}\right)+ \mathbb{C}\zeta_m.
\eee
We define $H_0$ to be the extension of $\tilde{H}_0$ with domain
$\mathrm{D}(H_0)=\mathcal{D}$. By virtue of \cite[Prop. 4.17]{BDG}, $H_0$ is selfadjoint. Moreover $\mathrm{D}(H_0)=\mathrm{D}\left(\overline{\tilde{H}_0}\right)$ if and only if $m\geq 1$, and in this case $\tilde{H}_0$ is essentially selfadjoint.

\begin{lemma}
Suppose the potential $V$ satisfies the conditions of the Assumption~\ref{condpot}. Then the potential $W({\rho})-C_0$ is relatively compact with respect to $H_0$.
\label{lemma2}
\end{lemma}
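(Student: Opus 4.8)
The plan is to show directly that the bounded multiplication operator $g:=W-C_0$ has the property that $g(H_0+1)^{-1}$ is compact; by definition this is exactly the assertion that $g$ is relatively compact with respect to $H_0$. Evaluating the resolvent at $-1$ is legitimate because, setting $W=0$ in the argument of Lemma~\ref{lemma1}, Hardy's inequality gives $H_0\ge 0$, so $-1$ lies in the resolvent set and $(H_0+1)^{-1}$ is a bounded, everywhere-defined operator; moreover $D(H_0)\subseteq L^2(0,\infty)=D(g)$ trivially since $g$ is bounded. The first thing I would record is that Assumption~\ref{condpot} forces $g\in L^2(0,\infty)$: as $W$ is continuous and bounded on $[0,\infty)$, $g$ is bounded and hence square integrable on $[0,1]$, while condition~\ref{cond2} is precisely the statement $g\in L^2(1,\infty)$. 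This global $L^2$ bound on $g$ is what I intend to exploit.

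My main approach is to prove the stronger statement that $g(H_0+1)^{-1}$ is Hilbert--Schmidt, by writing the resolvent as an integral operator with kernel $G(\rho,\rho')$ and estimating it explicitly. After the substitution $\psi=\sqrt{\rho}\,\phi$, the homogeneous equation $-\psi''+\frac{m^2-1/4}{\rho^2}\psi+\psi=0$ becomes the modified Bessel equation of order $m$, so a fundamental system is $\psi_1(\rho)=\sqrt{\rho}\,I_m(\rho)$ and $\psi_2(\rho)=\sqrt{\rho}\,K_m(\rho)$. Here $\psi_1$ is the solution regular at the origin, behaving like $\rho^{\frac12+m}$, which is exactly the boundary behaviour singled out by the self-adjoint extension $H_0$ through the summand $\CC\zeta_m$ in \eqref{domdef}, whereas $\psi_2$ is the solution that is square integrable at infinity. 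Consequently $(H_0+1)^{-1}$ has Green kernel $G(\rho,\rho')=\mathcal{W}^{-1}\psi_1(\rho_<)\psi_2(\rho_>)$, with $\mathcal{W}$ the constant Wronskian and $\rho_<,\rho_>$ the smaller and larger of $\rho,\rho'$. Using the standard small- and large-argument asymptotics of $I_m$ and $K_m$ I would establish the uniform slice bound $\sup_{\rho>0}\int_0^\infty |G(\rho,\rho')|^2\,d\rho'\le K<\infty$. Granting this,
\[
\|g(H_0+1)^{-1}\|_{\mathrm{HS}}^2=\int_0^\infty|g(\rho)|^2\Big(\int_0^\infty|G(\rho,\rho')|^2\,d\rho'\Big)d\rho\le K\,\|g\|_{L^2}^2<\infty,
\]
so $g(H_0+1)^{-1}$ is Hilbert--Schmidt and in particular compact, which is the claim.

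The step I expect to be the main obstacle is precisely this uniform one-variable $L^2$ bound on the kernel, because the small-$\rho$ behaviour of $\psi_2$ changes character across the range $m\ge 0$: for $m>0$ one has $\psi_2(\rho)\sim \rho^{\frac12-m}$, which is not even square integrable near the origin once $m\ge 1$, while for $m=0$ the logarithmic singularity $K_0(\rho)\sim-\ln\rho$ gives $\psi_2(\rho)\sim\sqrt{\rho}\,\ln\rho$. The point is that these singular factors only ever multiply the regular solution $\psi_1$ after integration away from the origin, so the slice bound survives in every regime, but one must verify this case by case and check uniformity up to the transition values $m=\tfrac12$ and $m=1$; one must also confirm that the kernel written down is the resolvent of the chosen extension $H_0$ rather than of some other realisation, which is where the identification of $\psi_1$ with the boundary condition encoded by $\zeta_m$ is essential. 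As a more robust alternative avoiding Bessel functions, I could instead split $g=g\chi_{[0,\delta]}+g\chi_{[\delta,R]}+g\chi_{[R,\infty)}$: the last term has operator norm at most $\sup_{\rho>R}|g|\to 0$ since $W\to C_0$; the middle term is compact because the potential is bounded and smooth away from the origin, so elliptic regularity together with Rellich's theorem yields local compactness of $(H_0+1)^{-1}$; and the first term is controlled by $\|g\chi_{[0,\delta]}\|_{L^2}\to 0$ once one knows that $(H_0+1)^{-1}$ maps $L^2$ boundedly into $L^\infty$ near the origin. In this softer route the delicate point is again the Calogero singularity, now in the borderline case $m=0$, where the coefficient $-\tfrac14$ is critical and Hardy's inequality no longer controls the full $H^1$ norm, so local compactness near $\rho=0$ needs a separate argument.
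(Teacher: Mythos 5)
Your proposal is correct and follows essentially the same route as the paper: both identify the resolvent kernel of the chosen extension as $\sqrt{\rho\sigma}\,I_m(\min\{\rho,\sigma\})K_m(\max\{\rho,\sigma\})$ (the paper citing \cite[\S4.2]{BDG} for this, including the identification of the boundary condition with the regular solution $\rho^{\frac12+m}$) and conclude that $(W-C_0)(H_0+1)^{-1}$ is Hilbert--Schmidt, hence compact. Your uniform slice bound $\sup_{\rho}\int_0^\infty|G(\rho,\rho')|^2\,d\rho'<\infty$ combined with $W-C_0\in L^2(0,\infty)$ is just a repackaging of the paper's four-region decomposition of the kernel with the same Bessel asymptotics, including the case checks at $m=0$ and $m\geq1$ that you correctly flag.
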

\begin{proof}
Let 
\bee
     J_m({\rho})=\sum_{j=0}^\infty \frac{(-1)^j ({\rho}/2)^{2j+m}}{j! \Gamma (j+m
+1)}
\eee
be the Bessel function of the first kind. Let 
\bee
    I_m({\rho})=i^{-m}J_m(i{\rho}), \qquad K_m({\rho})=\begin{cases} \frac{\pi}{2} \frac{I_{-m}({\rho})-I_{m}({\rho})}{\sin(m
\pi)}, &  m\not=0,1,\ldots\\ 
\frac{(-1)^{m-1}}{2}\left(\left.\frac{\partial I_{\nu}(\rho)}{\partial \nu}\right|_{\nu=m} + \left.\frac{\partial I_{\nu}(\rho)}{\partial \nu}\right|_{\nu=-m}   \right), & m=0,1,\ldots
\end{cases}
\eee
be the modified Bessel functions. Then the resolvent $(H_0+1)^{-1}$ is given by the Green's function \cite[\S4.2]{BDG} 
\bee  
       G_m({\rho},{\sigma})=\sqrt{{\rho}{\sigma}}I_m(\min\{{\rho},{\sigma}\}) K_m
(\max\{{\rho},{\sigma}\}).
\eee
Decompose
\bee
      G_m({\rho},{\sigma})=\sum_{k=1}^4 G^k_m({\rho},{\sigma}),
\eee
where
\begin{align*}
    G^1_m({\rho},{\sigma})&=G_m({\rho},{\sigma}) \mathbbm{1
}_{(0,1]}({\rho})\mathbbm{1}_{(0,1]}({\sigma}), \\
    G^2_m({\rho},{\sigma})&=G_m({\rho},{\sigma}) \mathbbm{1}_{(0,1]}({\rho})\mathbbm{1}_{(1,\infty)}({\sigma}), \\
    G^3_m({\rho},{\sigma})&=G_m({\rho},{\sigma}) \mathbbm{1}_{(1,\infty)}({\rho})\mathbbm{1}_{(0,1]}({\sigma}), \\
    G^4_m({\rho},{\sigma})&=G_m({\rho},{\sigma}) \mathbbm{1}_{(1,\infty)}({\rho})\mathbbm{1}_{(1,\infty)}({\sigma}).
 \end{align*}
Then \cite[(4.10)]{BDG}
\begin{align*}
  |G_m^1({\rho},{\sigma})|&\leq \begin{cases} a_{10}(\min\{{\rho},{\sigma}\})^{\frac12}|\ln (\max\{{\rho},{\sigma}\})|, & m=0, \\
a_{1m} (\min\{{\rho},{\sigma}\})^{\frac12+m}(\max\{{\rho},{\sigma}\})^{\frac12-m}, & m>0,
\end{cases} \\
|G_m^2({\rho},{\sigma})|& \leq a_{2m} {\rho}^{\frac12+m}e^{-{\sigma}}\mathbbm{1}_{(0,1]}({\rho})\mathbbm{1}_{(1,\infty)}({\sigma}), \\
|G_m^3({\rho},{\sigma})|& \leq a_{3m} e^{-{\rho}}{\sigma}^{\frac12+m}\mathbbm{1}_{(1,\infty)}({\rho})\mathbbm{1}_{(0,1]}({\sigma}), \\
|G_m^4({\rho},{\sigma})|& \leq a_{4m} e^{-|{\rho}-{\sigma}|}\mathbbm{1}_{(1,\infty)}({\rho})\mathbbm{1}_{(1,\infty)}({\sigma}), 
\end{align*}
where $a_{im}>0$ are constants depending on $m$. 

Let
\bee
     \mathcal{J}_k=\int_0^\infty \int_0^\infty |(W({\rho})-C_0) G_m^k({\rho},{\sigma})|^2 d{\rho} d{\sigma},
\eee
so that
\bee
   \int_0^\infty \int_0^\infty |(W({\rho})-C_0) G_m({\rho},{\sigma})|^2 d{\rho} d{\sigma}=\sum_{k=1}^4 \mathcal{J}_k.
\eee
Then
\begin{align}
\mathcal{J}_{2}\leq a^{2}_{2m}\int^{1}_{0}|W({\rho})-C_0|^{2}{\rho}^{1+2m}d{\rho}\int^{\infty}_{1}e^{-2{\sigma}}d{\sigma}<\infty,
\end{align}
since $W$ is continuous on $[0,\infty)$. 
The square integrability of $W-C_0$ on $[1,\infty)$ ensures that 
\begin{align}
\mathcal{J}_{3}\leq a^{2}_{3m}\int^{1}_{0}{\sigma}^{1+2m}d{\sigma}\int^{\infty}_{1}|W({\rho})-C_0|^{2}e^{-2{\rho}}d{\rho}  <\infty,
\end{align}
and
\begin{align}
\mathcal{J}_{4}&\leq a^{2}_{4m}\int^{\infty}_{1}\int^{\infty}_{1}|W({\rho})-C_0|^{2}e^{-2|{\rho}-{\sigma}|}d{\sigma}d{\rho}\nonumber\\
&={a}^{2}_{4m}\int^{\infty}_{1}|W({\rho})-C_0|^{2}\left(e^{-2{\rho}}\int^{{\rho}}_{1}e^{2{\sigma}}d{\sigma}+e^{2{\rho}}\int^{\infty}_{{\rho}}e^{-2{\sigma}}d{\sigma}\right)d{\rho}\nonumber\\
&\leq {a}^{2}_{4m}\int^{\infty}_{1}|W({\rho})-C_0|^{2}d{\rho}<\infty.
\end{align}
Finally, using  $|W({\rho})-C_0|^2 < B $ for  ${\rho} \in [0,1]$ and some positive real number $B$, we consider the term 
for $k=1$. If   $m>0$ we have
\begin{align}
\mathcal{J}_{1}&\leq a^{2}_{1m}\int^{1}_{0}|W({\rho})-C_0|^{2}\left({\rho}^{1-2m}\int^{{\rho}}_{0}{\sigma}^{1+2m}d{\sigma}+{\rho}^{1+2m}\int^{1}_{{\rho}}{\sigma}^{1-2m}d{\sigma}\right)d{\rho}\nonumber\\
& \leq a^{2}_{1m}B \int^{1}_{0} \left( {\rho}^{1-2m}\int^{{\rho}}_{0}{\sigma}^{1+2m}d{\sigma}+{\rho}^{1+2m}\int^{1}_{{\rho}}{\sigma}^{1-2m}d{\sigma}\right) d{\rho}\  \nonumber \\
&=  \frac{ a^{2}_{1m}B}{4(m+1)}. 
\end{align} 
The computation of the integral in the last step is elementary, but has to be carried out separately for the case $m=1$. However, the answer agrees with the general formula given above.
If  $m=0$ we have
\begin{align}
\mathcal{J}_{1}&\leq a^{2}_{10}\int^{1}_{0}|W({\rho})-C_0|^{2}\left(\ln ^2 ({\rho})\int^{{\rho}}_{0}{\sigma}d{\sigma}+{\rho}\int^{1}_{{\rho}}\ln^{2} ({\sigma})d{\sigma}\right)d{\rho}\nonumber\\
&\leq {a}^{2}_{10}B \int^{1}_{0}\left( -\frac 12  {\rho}^2 \ln^2 ({\rho})+ 2 {\rho}^2 \ln({\rho}) + 2 {\rho} -2 {\rho}^2\right)  d{\rho}\nonumber\\
&=\frac{ 2 {a}^{2}_{10}B}{27},
\end{align}
where we have again omitted steps in an elementary integration. 

Summing up, for $m\ge0$ we have $\mathcal{J}_k<\infty$ under the assumptions of the lemma. Therefore, the operator 
 \bee
(W-C_0)(H_0+1)^{-1}
\eee 
is Hilbert-Schmidt and hence compact. This ensures that indeed $W-C_0$ is relatively compact with respect to $H_0$. 
\end{proof}

\begin{definition}
\label{extdef}
With $\tilde{H}$ of the form \eqref{hamdef} and $V$ satisfying the Assumption~\ref{condpot}, we denote by $H$ the extension of the operator $\tilde{H}$ to the domain $\mathrm{D}(H)=\mathcal{D}$ defined in \eqref{domdef}. 
\end{definition}

By virtue of \cite[Corollary~2 to Theorem~XIII.14]{RS4}, $H$ is always selfadjoint. Moreover, $H$ is semi-bounded below and
\bee
      \sigma_{\mathrm{ess}} (H)=[C_0,\infty).
\eee
We now establish conditions for $H$ to have infinitely many bound states.

 \begin{theorem} If  Assumption~\ref{condpot} is satisfied, and moreover $V$ has the asymptotic expansion 
\begin{equation}
V({\rho}) = C_0 + \frac{C_1}{{\rho}} + o\left( \frac{1}{\rho}\right), \quad \text{for} \quad {\rho}\rightarrow \infty,
\end{equation} 
with $C_1<0$,  then $H$ has infinitely many eigenvalues below $C_0$.
\label{lemma3}
\end{theorem}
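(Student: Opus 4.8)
The plan is to use the variational (min--max) principle relative to the bottom of the essential spectrum. Since $H$ is selfadjoint, semi-bounded below and $\sigma_{\mathrm{ess}}(H)=[C_0,\infty)$, it suffices to exhibit, for every $N\in\mathbb{N}$, an $N$-dimensional subspace $\mathcal{L}_N$ of the form domain of $H$ on which the shifted quadratic form is strictly negative, i.e.\ $\langle (H-C_0)u,u\rangle<0$ for all $u\in\mathcal{L}_N\setminus\{0\}$. This forces the $N$-th min--max value to lie strictly below $C_0$, so by the min--max theorem \cite[Theorem~XIII.1]{RS4} the operator $H$ has at least $N$ eigenvalues in $(-\infty,C_0)$; letting $N\to\infty$ then gives infinitely many.

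I would build the $\mathcal{L}_N$ from a single rescaled bump pushed out to infinity, where the attractive Coulomb tail $C_1/\rho$ dominates. Fix $\phi\in C_0^\infty(1,2)$ with $\phi\not\equiv0$ and, for $\lambda>0$, set $u_\lambda(\rho)=\phi(\rho/\lambda)$, which is supported in $[\lambda,2\lambda]$ and so lies in $C_0^\infty(0,\infty)\subset\mathrm{D}(H)$. The substitution $s=\rho/\lambda$ yields the three relevant scalings
\begin{equation}
\|u_\lambda\|^2=\lambda\|\phi\|^2,\qquad \int_0^\infty |u_\lambda'|^2\,d\rho=\frac{1}{\lambda}\|\phi'\|^2,\qquad \int_0^\infty \frac{|u_\lambda|^2}{\rho}\,d\rho=\int_1^2\frac{\phi(s)^2}{s}\,ds=:A>0.
\end{equation}
Writing $V(\rho)-C_0=C_1/\rho+R(\rho)$ with $\rho R(\rho)\to0$ (the hypothesis directly gives this, and the Calogero term $c_2/\rho^2$ is automatically absorbed into $R$), the potential part of the form equals $C_1A+\int R\,|u_\lambda|^2$, whose remainder is bounded in modulus by $A\sup_{\rho\ge\lambda}|\rho R(\rho)|\to0$. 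Hence
\begin{equation}
\langle (H-C_0)u_\lambda,u_\lambda\rangle=\frac{\|\phi'\|^2}{\lambda}+C_1A+\operatorname{o}(1)\xrightarrow[\lambda\to\infty]{}C_1A<0,
\end{equation}
so there is a $\Lambda>0$ with $\langle (H-C_0)u_\lambda,u_\lambda\rangle<0$ for all $\lambda\ge\Lambda$.

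To assemble $\mathcal{L}_N$ I would choose $\lambda_1,\dots,\lambda_N\ge\Lambda$ with $2\lambda_k<\lambda_{k+1}$, so that the supports $[\lambda_k,2\lambda_k]$ are pairwise disjoint. Then the $u_{\lambda_k}$ are orthogonal, and for $u=\sum_k a_ku_{\lambda_k}$ every cross term in both the kinetic and the potential part of the form vanishes by disjointness of supports (the derivatives have disjoint supports too), so that
\begin{equation}
\langle (H-C_0)u,u\rangle=\sum_{k=1}^N |a_k|^2\,\langle (H-C_0)u_{\lambda_k},u_{\lambda_k}\rangle<0\qquad (u\neq0).
\end{equation}
Taking $\mathcal{L}_N=\operatorname{span}\{u_{\lambda_1},\dots,u_{\lambda_N}\}$ then gives the required subspace, since the Rayleigh quotient on $\mathcal{L}_N$ is a weighted average of the negative numbers $\langle (H-C_0)u_{\lambda_k},u_{\lambda_k}\rangle/\|u_{\lambda_k}\|^2$ and is therefore bounded above by a negative constant.

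The only genuine point is the choice of scaling: with $u_\lambda(\rho)=\phi(\rho/\lambda)$ the kinetic energy decays like $\lambda^{-1}$ while the potential energy tends to the fixed negative constant $C_1A$, and this is exactly where the hypothesis $C_1<0$ is used. Because the trial functions live at large $\rho$, the behaviour of $V$ near the origin, the Calogero singularity $c_2/\rho^2$, and the choice of selfadjoint extension at $\rho=0$ play no role whatsoever, so the construction neatly sidesteps the subtleties developed earlier in this section. The sole estimate requiring care is the control of the $\operatorname{o}(1/\rho)$ remainder, which is routine.
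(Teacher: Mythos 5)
Your proposal is correct and follows essentially the same route as the paper: both use compactly supported bump functions dilated out to infinity (the paper takes the dyadic, $L^2$-normalised family $u_n(\rho)=2^{-n/2}u(2^{-n}\rho)$ with disjoint supports), exploit that the kinetic term scales one power of $\lambda$ faster to zero than the attractive Coulomb term, control the $\operatorname{o}(1/\rho)$ remainder exactly as you do, and conclude via the Rayleigh--Ritz/min--max principle. Your packaging through explicit $N$-dimensional subspaces rather than an orthonormal sequence is only a cosmetic variation.
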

\begin{proof}
 We use a similar argument to the one considered in \cite[\S 8.3]{GS}.
By assumption, we can write the potential as 
\bee
V(\rho) = C_0 + \frac{C_1 +f(\rho)}{\rho},
\eee
in terms of a continuous function $f$ on $(0,\infty)$ satisfying 
\bee
 \lim_{\rho\rightarrow \infty}  f(\rho) =0.
 \eee
We pick a $u\in C^\infty_{0}(0,\infty)$, such that $\operatorname{supp}(u)\subset(1,2)\hspace{1.5mm}\text{and}\hspace{1.5mm}\left\|u\right\|=1$, and set 
\bee
u_{n}(\rho)=2^{-n/2}u(2^{-n}\rho),\hspace{5mm}n=0,1,...
\eee
Then the $u_n$ have non-overlapping support and satisfy the orthonormality condition
\bee
 \left\langle u_{n},u_{m}\right\rangle=\delta_{mn}.
\eee
Changing variables to ${\sigma}=2^{-n}\rho$, we calculate 
\begin{align}
\label{ineq}
\langle (H-C_0) u_n,u_n\rangle &= \int_{2^n}^{2^{n+1}} \left( \left|\frac{d u_n}{d\rho}(\rho) \right|^2+ \frac{C_1 +f(\rho)} {\rho} |u_n(\rho)|^2 \right)d \rho \nonumber \\
& = 2^{-2n}\int_{1}^{2} |u'({\sigma})|^2 d \sigma +
2^{-n}\int_{1}^{2} \frac{C_1+f(2^n\sigma) } {\sigma}|u(\sigma)|^2 d\sigma.
\end{align}
Now we can pick $n_0$  so that, for all $\sigma \in [1,2]$ and  $n>n_0$, we have  $|f(2^n\sigma)|<C_1$. If $C_1<0$ it follows that 
\bee
\int_{1}^{2} \frac{C_1+f(2^n\sigma) } {\sigma}|u(\sigma)|^2 d\sigma <0 \quad \text{for} \; n > n_0.
\eee
Thus the last line of \eqref{ineq} is a sum of a positive and a negative term when $C_1<0, n> n_0$. Since the positive term decreases with $n$ faster than the negative term, we can make the sum negative by choosing $n$ big enough. Then $ \langle (H-C_0) u_n,u_n\rangle < 0$ and therefore $\langle Hu_n,u_n\rangle$ lies below the essential spectrum for sufficiently large $n$. This is enough to ensure that we have an infinite number of negative eigenvalues as a consequence of the Rayleigh-Ritz principle, cf. \cite[\S8.3]{GS}. 
\end{proof}


\section{Bound states in the linearised YMH equations}
\subsection{BPS monopoles and dyons}
\label{dyons}
The first example of a Schr\"odinger operator on the half-line to which we apply the results of the previous section arises in YMH theory in 3+1 dimensional Minkowski space. Our main reference for the derivation of this Schr\"odinger operator is the paper 
 \cite{RS}, to which we refer for details and background.  

The YMH model consists of a non-abelian gauge potential $A= A_0 dt + A_1dx_1 +A_2dx_2 +A_3 dx_3$ coupled to a Higgs field $\phi$, both taking values in the Lie algebra $su(2)$. Writing $\textbf{x}= (x_1,x_2,x_3)\in \mathbb{R}^3$, and $\partial_i =\partial/\partial x_i$, $i=1,2,3$, we only require the spatial covariant derivatives $D_i=\partial_i+e \left[A_i,\cdot \right]$ and the spatial components of the Yang-Mills field strength tensor $F_{ij}=\partial_i A_j-\partial_jA_i+e\left[A_i,A_j\right]$, where $[ \cdot, \cdot ]$ is the Lie algebra commutator, $e$ is the Yang-Mills coupling constant and $i,j=1,2,3$. 

In the following we set the coupling constant $e$ to one and consider a particular limit of the theory, called the BPS limit, where the self-coupling of the Higgs field vanishes. In that limit, the second order static YMH equations are implied by the first order BPS equations 
\bee
\label{BPSeq}
D_i\phi= \frac  12 \epsilon_{ijk} F_{jk},
\eee  
where $i,j,k=1,2,3$ and repeated indices are summed over. 

In terms of the a basis $t_1,t_2,t_3$ of the $su(2)$ Lie algebra satisfying $[t_i,t_j]=\epsilon_{ijk} t_k$, a particular solution of the BPS equations is the spherically symmetric BPS monopole: 
 \begin{equation}
A_{i}(\textbf{x})=\left(\frac{1}{r^{2}}-\frac{1}{r\sinh(r)}\right)(\textbf{x}\times \textbf{t})_i,\hspace{3mm}\phi(\textbf{x})=\frac{1-r\coth(r)}{r^{2}}\textbf{x}\cdot\textbf{t}, 
\label{eq:26}
\end{equation}
where $ r =\sqrt{x_1^2+x_2^2+x_3^2}$ and we have denoted Euclidean vectors by bold letters. 
The invariance of the BPS equations under Euclidean symmetries implies that translating this solution in space gives an $\RR^3$ worth of solutions. Gauge invariance of the equations  means that $SU(2)$ gauge transformations map solutions to solutions. In gauge theories, solutions related by gauge transformations which are the identity at infinity are generally considered equivalent, but gauge transformations which are non-trivial at infinity may have a physical significance. Such gauge transformations are often called `large', and will play a role in the interpretation of our results. We therefore  review them briefly. 

As discussed and explained in \cite{GM}, for monopoles in $SU(2)$ YMH theory, there is essentially a circle of large gauge transformations, generated by the Higgs field itself and parameterised by an angle $\chi \in [0,2\pi)$:
\bee
\label{largeg}
g_\chi (\textbf{x}) = \exp (\chi \phi(\textbf{x}) ).
\eee 
Note that, for large $r$, $\phi(\textbf{x}) \approx -\hat{\textbf{x}}$, so that $g_\chi (\textbf{x})$ is asymptotically a rotation about $\textbf{x}$. 
Acting with translations and  the large gauge transformations \eqref{largeg}, and dividing out by small gauge transformations yields the moduli space of monopoles of magnetic charge one:
\bee
\label{mod1}
M_1= \RR^3 \times S^1.
\eee   
The physical significance of the angular variable $\chi$  on $S^1$ becomes manifest when it varies in time. The monopole then acquires an electric charge proportional to the angular speed $\dot{\chi}$, thus turning into a dyon. Quantum states are given by wave functions on the moduli space \cite{GM}. A wave function of the form $\exp(in\chi)$, with $n\in \ZZ $ describes a dyon of electric charge $n$.

 \subsection{Linearising around the  BPS monopole}
 \label{linearised}
The Schr\"odinger operator we would like to study is obtained by linearising the general, time-dependent YMH equations around the static configuration \eqref{eq:26}. The stationary ansatz 
\begin{equation}
A_{i}(t,\mathbf{x})=A_{i}(\mathbf{x})+a_{i}(\mathbf{x})e^{i\omega t},\hspace{5mm}\phi(t,\mathbf{x})=\phi(\mathbf{x})+\varphi(\mathbf{x})e^{i\omega t},
\label{eq:240}
\end{equation}
considered in \cite{RS}, leads to the  following  coupled partial differential equations on Euclidean $\mathbb{R}^3$:
\begin{align}
\label{eq:241}
D_{i}D_{i}\varphi+\left[a_{i},D_{i}\phi\right]+D_{i}\left[a_{i},\phi\right]&=-\omega^{2}\varphi, \nonumber \\
D_{i}D_{i}a_{j}-D_{i}D_{j}a_{i}+\left[a_{i},F_{ij}\right]&=\left[\phi,D_{j}\varphi\right]+\left[\varphi,D_{j}\phi\right]+\left[\phi,\left[a_{j},\phi\right]\right]-\omega^{2}a_{j}.
\end{align}

Exploiting the (suitably defined) rotational symmetry of  the BPS monopole and focusing on the vanishing total angular momentum, the ansatz
\begin{align}
\label{eq:ansatz2}
\varphi(\mathbf{x})=0, \quad a_i(\mathbf{x})=\frac{1}{r}\left(v(r) ((\hat{\mathbf{x}}\cdot\mathbf{t})\hat{x}_i - t_i)  +\sqrt{2}\alpha(r) (\hat{\mathbf{x}}\cdot\mathbf{t})\hat{x}_i\right),
\end{align}
involving two unknown functions of the radial coordinate $r$, leads to the following set of ordinary differential equations  
\begin{align}
\label{valpha}
\left(-\frac{d^{2}}{dr^{2}}+\frac{3}{\sinh^{2}(r)}-\frac{2\coth(r)}{r}+\coth^{2}(r)\right)v+\frac{2\sqrt{2}\coth(r)}{\sinh(r)}\alpha &=\omega^{2}v, \nonumber \\
\left(-\frac{d^{2}}{dr^{2}}+\frac{2}{\sinh^{2}(r)}+\frac{2}{r^{2}}\right)\alpha+\frac{2\sqrt{2}\coth(r)}{\sinh(r)}v&=\omega^{2}\alpha.
\end{align}
As explained in \cite{RS}, this system of equations can be decoupled when $\omega\neq 0$, and brought into the form 
\begin{align}
\label{xizeta}
\left(-\frac{d^{2}}{dr^{2}}+\frac{1}{\sinh^{2}(r)}+\frac{2}{r^{2}}-\frac{2\coth(r)}{r}+\coth^{2}(r)\right)\xi &=\omega^{2}\xi,\nonumber \\
\left(-\frac{d^{2}}{dr^{2}}+\frac{2}{\sinh^{2}(r)}\right)\zeta&=\omega^{2}\zeta,
\end{align}
via the (invertible) transformation
\begin{align}
\label{changevar}
\omega\alpha &=\frac{d\zeta}{dr}-\frac{\zeta}{r}+\frac{\sqrt{2}}{\sinh(r)}\xi,\nonumber \\
\omega v&=-\frac{d\xi}{dr}-\frac{1-r\coth(r)}{r}\xi-\frac{\sqrt{2}}{\sinh(r)}\zeta,
\end{align}
provided $\omega \neq 0$. 
We note that the equations \eqref{valpha} also  have a zero-energy solution which can easily be given explicitly, see e.g. \cite{RS}. Since $\omega=0$ for this solution, it cannot be obtained from solutions of the system \eqref{xizeta} which we study here.

As also explained in \cite{RS}, the equation for $\zeta$ does not have bound states, and moreover gauge invariance requires that $\zeta =0$. 
Therefore we focus on the equation for $\xi$, which is a Sturm-Liouville problem for the  Schr\"odinger operator 
\begin{equation}
\tHYMH=-\frac{d^{2}}{dr^{2}}+\VYMH
\label{pot3_24}
\end{equation}
with the potential
\begin{equation}
\VYMH(r)=\frac{1}{\sinh^{2}(r)}+\frac{2}{r^{2}}-\frac{2\coth(r)}{r}+\coth^{2}(r).
\label{pot1}
\end{equation}
This potential has the asymptotic expansion
\begin{equation}   \label{V2exp}
\VYMH(r)=\begin{cases} \frac{2}{r^{2}}+\operatorname{O}(1), & r\to 0, \\
1-\frac{2}{r}+ \frac{2}{r^{2}}+ \operatorname{O}\left(\frac{1}{r^3}\right), & r\to \infty, \end{cases}
\end{equation}
so, in particular, is of the form \eqref{gencondV} with $c_2=2$, $C_0=1$ and $C_1=-2$. It is easy to check that it satisfies the assumptions of Theorem~\ref{lemma3} (and {\em a fortiori} those of  Lemmas \ref{lemma1} and \ref{lemma2}). As $m=3/2>1$ in \eqref{defc1}, we have the following result. 

\begin{corollary} \label{coro3}
The closure $\HYMH$ of $\tHYMH$ acting on $L^2(0,\infty)$ is selfadjoint, its essential spectrum is the segment $[1,\infty)$ and it has infinitely many eigenvalues below $1$. 
\end{corollary}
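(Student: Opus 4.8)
The plan is to verify that the concrete potential $\VYMH$ in \eqref{pot1} falls within the general framework of Section~\ref{generalresults} and then simply invoke the results already proved there. Concretely, I would check that $\VYMH$ satisfies Assumption~\ref{condpot} together with the extra hypothesis of Theorem~\ref{lemma3}; self-adjointness, the identification of the essential spectrum, and the existence of infinitely many eigenvalues below $C_0$ then follow at once, so the corollary is really an exercise in matching the example to the hypotheses.

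First I would isolate the Calogero part by writing $\VYMH(r)=2/r^2+W(r)$ with $W(r)=1/\sinh^2(r)-2\coth(r)/r+\coth^2(r)$, so that $c_2=2$. The one point requiring care is that $W$ must be continuous and bounded on the closed half-line $[0,\infty)$, even though each of its three terms is individually singular at $r=0$. I would expand them (using $1/\sinh^2 r = 1/r^2 - 1/3 + \operatorname{O}(r^2)$, $\coth r = 1/r + r/3 + \operatorname{O}(r^3)$ and $\coth^2 r = 1 + 1/\sinh^2 r$) and check that the $1/r^2$ poles cancel, leaving $W(r)\to -1/3$. At the other end, $\coth^2 r - 1 = 1/\sinh^2 r\to 0$ exponentially and $\coth(r)/r\to 0$, so $W(r)\to 1=:C_0$; moreover $W(r)-C_0 = 2/\sinh^2 r - 2\coth(r)/r$ decays like $-2/r$ plus an exponentially small term, which is square-integrable on $[1,\infty)$. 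This establishes Assumption~\ref{condpot} with $c_2=2$ and $C_0=1$, and {\em a fortiori} the hypotheses of Lemmas~\ref{lemma1} and \ref{lemma2}.

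With Assumption~\ref{condpot} in hand, the self-adjointness of the extension $H$, its semi-boundedness and the identity $\sigma_{\mathrm{ess}}(H)=[C_0,\infty)=[1,\infty)$ are exactly the conclusions recorded after Definition~\ref{extdef}. Since $c_2=2=m^2-\tfrac14$ forces $m=3/2\geq 1$ in \eqref{defc1}, the model operator $\tilde{H}_0$ is essentially self-adjoint, and because $W$ is bounded the full expression $\tHYMH=\tilde{H}_0+W$ is essentially self-adjoint as well; hence $\mathrm{D}(H)$ coincides with the domain of the closure $\HYMH=\overline{\tHYMH}$, which is precisely what licenses phrasing the statement in terms of the closure. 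Finally, reading off \eqref{V2exp} gives $\VYMH(r)=1-2/r+o(1/r)$, i.e. the hypothesis of Theorem~\ref{lemma3} with $C_1=-2<0$, and that theorem yields infinitely many eigenvalues below $C_0=1$. The only genuine computations are the cancellation of the $1/r^2$ singularities at $r=0$ and the extraction of the leading $-2/r$ tail at infinity, both elementary; I expect no real obstacle beyond bookkeeping, since all of the analytic heavy lifting (Hardy's inequality, the Hilbert--Schmidt bound for relative compactness, and the Rayleigh--Ritz trial-function argument) has already been carried out at the level of the general results.
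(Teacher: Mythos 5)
Your proposal is correct and takes essentially the same route as the paper: the paper likewise verifies that $\VYMH$ has the form \eqref{gencondV} with $c_2=2$ (hence $m=3/2>1$ in \eqref{defc1}), $C_0=1$ and $C_1=-2$, checks Assumption~\ref{condpot} together with the hypothesis of Theorem~\ref{lemma3}, and reads off the corollary. Your extra details --- the cancellation of the $1/r^2$ poles at the origin, the square-integrability of the $-2/r$ tail, and essential self-adjointness of $\tHYMH$ as a bounded perturbation of $\tilde{H}_0$, which identifies the closure with the extension of Definition~\ref{extdef} --- merely make explicit the verifications the paper leaves to the reader.
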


The infinitely many eigenstates of the Schr\"odinger operator \eqref{pot3_24} define radially symmetric fluctuations around the BPS monopole via substitution into the equations \eqref{changevar} and \eqref{eq:ansatz2}. They therefore belong to the sector with magnetic change $m=1$. However, as explained in \cite{RS}, the fluctuations do not have well defined electric charge.

To obtain eigenstates of the electric charge operators, one needs to include the collective angular coordinate $\chi$ for large gauge transformations \eqref{largeg} in the discussion.
These gauge transformations act on the underlying BPS monopoles \eqref{eq:26} but also, by conjugation, on the fluctuations \eqref{eq:ansatz2}, rotating them to $\varphi(\mathbf{x})_\chi=0$ and 
\bee
 a_i(\mathbf{x})_\chi=\left(\frac{v(r)+\sqrt{2}\alpha(r)}{r}\right)(\hat{\mathbf{x}}\cdot\mathbf{t})\hat{x}_i + \cos(\chi|\phi|(\hat{\mathbf{x}}))\left( \frac{v(r)}{r}\right)
  ((\hat{\mathbf{x}}\cdot\mathbf{t})\hat{x}_i - t_i)  
  +\sin(\chi|\phi(\hat{\mathbf{x}})|) (\mathbf{t} \times \hat{\mathbf{x}})_i. 
\eee
Quantum states of definite electric charge are obtained by taking superpositions of these states, weighted with the dyonic wave function $\exp(in\chi)$. This is essentially a Fourier transform from the angular variable $\chi$ to the integer label $n$, and  entirely analogous to the standard interpretation of the moduli space wave functions, which describe superpositions of BPS monopoles. The electric charge  $n$ is arbitrary, so we replicate the  infinitely many eigenstates of the Schr\"odinger operator \eqref{pot3_24} 
 in each of the dyonic sectors $(1,n)$, $n\in \ZZ$.

\subsection{Upper bounds on the eigenvalues}

Having shown that $\HYMH$ has infinitely many negative eigenvalues we would like to find good numerical approximations to the first few of them, improving on previous numerical work in \cite{BT} and \cite{RS} which relied on shooting methods. For this purpose, we employ the Riesz-Galerkin method with a basis which exploits the fact that the potential $\VYMH$ approaches at infinity the potential
\begin{equation}
\label{Coulombpot}
V_C(r)= \frac{\alpha}{r} + \frac{l(l+1)}{r^2} +1, 
\end{equation}
which combines a Coulomb potential with a centrifugal potential for orbital angular momentum quantum number $l$. 
In order to match the asymptotic form of $\VYMH$ we pick $\alpha =-2$ and $l=1$, so that 
\begin{equation}
\VYMH(r) = V_C(r) + \left(    \frac{1+\cosh^{2}(r)} {\sinh^{2}(r)}    +   \frac{2-2\coth (r)}{r}
      \right).
\end{equation}

 The eigenfunctions of the radial Coulomb Hamiltonian with differential expression
 \begin{equation*}
H_C = -\frac{d^2}{dr^2} +V_C
\end{equation*}
are well-known. They are defined in terms of the associated Laguerre polynomial $L^{N}_{k}(r)$
\begin{equation}
L^{N}_{k}(r)=(-1)^N \frac{d^{N}}{dr^{N}}L_{k+N}(r)  
\end{equation} 
where $L_{k}(r)=e^{r}\frac{d^{k}}{dr^{k}}\left(r^{k}e^{-r}\right)$ is the $k$-th Laguerre polynomial. The eigenfunctions of $H_C$ with $l=1$ and $\alpha=-2$, satisfying
\begin{equation}
\label{Coulombenergies}
H_C\U_{n}= E_n\U_{n}, \qquad \qquad E_n =-\frac{1}{n^2},
\end{equation}
 are 
\begin{equation} \label{confusiona0}
\U_{n}(r)=N_{n}^{-1/2} r^2 e^{-\frac{r}{n}}L^{3}_{n-2}\left(\frac{2r}{n}\right).
\end{equation}
Here $N_{n}$ is a normalisation constant ensuring $\|\U_{n}\|=1$ and the principal quantum number takes the values $n=2,3,...$.

The set $\{\U_{n}\}_{n=2}^\infty$ is an orthonormal basis for $L^{2}(0,\infty)$. In order to implement the Riesz-Galerkin method, we pick a finite-dimensional subspace
\bee
    \operatorname{Span} \{\tilde{b}_{n}\}_{n=1}^{\mathfrak{d}}\subset \operatorname{D}(H)
\eee
of dimension $\mathfrak{d}$, where
\begin{equation} \label{eqforbtilde} 
\tilde{b}_{n}(r)=N_{n+1}^{1/2}\U_{n+1}(r)=\left(\sum^{n-1}_{p=0}d^{n+1}_{p}r^{p+2}\right)e^{-\frac{r}{n+1}}
\end{equation}
for suitable coefficients $d^{n+1}_{p}\in \mathbb{R}$.
We then compute the mass matrix                         
\bee
    M=[M(i,j)]_{i,j=1}^{\mathfrak{d}}=\operatorname{diag}(N_2,\ldots,N_{\mathfrak{d}}),\qquad  
M(i,j)=\int^{\infty}_{0}\tilde{b}_{i}(r)\tilde{b}_{j}(r)dr,
\eee
and the stiffness matrix                                 
\bee \label{stifma}
    S=[S(i,j)]_{i,j=1}^{\mathfrak{d}},\qquad  
S(i,j)=\int^{\infty}_{0}\HYMH \tilde{b}_{i}(r)\tilde{b}_{j}(r)dr.
\eee
According to the Rayleigh-Ritz principle, the $k$th negative eigenvalue of 
$S\underline{u}=\nu M\underline{u}$ is an upper bound for the $k$th negative eigenvalue of $ \HYMH$. Further details on the computation of the entries of $S$ are given in Appendix~\ref{massandstiffness}.

For a basis of dimension $\mathfrak{d}=20$ we obtain the results shown in Table~\ref{tab:lambdaI} for the first eleven eigenvalues. We saw convergence of our computations up to single precision as we increased the size of $\mathfrak{d}$ from $1$ to $20$. Additionally we have the extrapolated results obtained for the first eleven eigenvalues for a basis of dimension $\mathfrak{d}=1000$ via linear interpolation.  In \cite{RS},  approximation to these eigenvalues were found via a shooting method and they appear to be below  those found in Table~\ref{tab:lambdaI}. There is no guarantee that the former are above the true eigenvalues of $\HYMH$, whereas the latter certainly are, due to the Rayleigh-Ritz principle.

\begin{table}
\centering
\setlength\tabcolsep{8pt}
\begin{tabular}{c|c|c|c}
\thead{n} & \thead{$\lambda_{n}$ \\ for basis \\ $\mathfrak{d}=20$} & \thead{Extrapolated \\ value of \\ $\lambda_{n}$ for \\ $\mathfrak{d}=1000$} &  \thead{Coulomb \\ approx-\\imation of \\ $1-\frac{1}{(n+1)^2}$}\\
\hline
 1   &  0.773243   &  0.772215  &  0.750000 \\
 2   &  0.897117  &  0.896315  &  0.888889 \\
 3   &  0.941347  &  0.940714  &  0.937500 \\
 4   &  0.962124  &  0.961609  &  0.960000 \\
 5   &  0.973529  &  0.973099  &  0.972222 \\
 6   &  0.980458  &  0.980094  &  0.979592 \\
 7   &  0.984983  &  0.984669  &  0.984375 \\
 8   &  0.988100  &  0.987825  &  0.987654 \\
 9   &  0.990179  &  0.990096  &  0.990000 \\
 10  &  0.990339  &  0.991761  &  0.991736 \\
 11  &  0.993265  &  0.993174  &  0.993056 \\

\end{tabular}
\caption{The values of $\lambda_{n}$ for $\mathfrak{d}=20$, the extrapolated values for $\mathfrak{d}=1000$ and the approximation based on \eqref{Coulombenergies}.}
\label{tab:lambdaI}
\end{table}

\section{The Laplace operator on the moduli space of two monopoles}
\subsection{The Atiyah-Hitchin metric and its asymptotic forms}

The BPS monopole \eqref{eq:26} and the moduli space \eqref{mod1} of charge one magnetic monopoles have remarkable generalisations for higher magnetic charges.  
Following the discovery of the charge one solution, there was rapid progress in constructing various solutions of higher magnetic charge. It is now well-understood that, for given magnetic charge $k$, there is in fact a $4k$-dimensional family of static monopole solutions which constitute the so-called moduli space of charge $k$ monopoles \cite{AHbook}. 

The basic, physical reason for the existence of so many static solutions is that, in the BPS limit, monopoles do not exert any forces on each other so that they can be `superimposed' with arbitrary values of the individual positions and phases. The interpretation of the $4k$ parameters in the moduli space as giving the positions and phases of  $k$  monopoles works well for well-separated monopoles. However, when the monopoles are close together they deform each other and  become bound states with a rich and complicated geometry. All of this is captured by the moduli spaces.  

The moduli spaces inherit a Riemannian metric from the kinetic energy functional of YMH theory. It was first argued by Manton in \cite{Manton} that geodesic motion on the moduli space, equipped  with this metric, is a good approximation to the dynamics of monopoles, provided they are moving sufficiently slowly. This is essentially an adiabatic approximation, where the time evolution is via a sequence of static equilibrium configurations. It was subsequently shown by Atiyah and Hitchin \cite{AHbook} that the moduli space metric is hyperk\"ahler. Combined with symmetry considerations, this is sufficient to determine the moduli space metric for monopoles of  charge two. In that case, the moduli space is eight dimensional, and has the form
\begin{equation}
\label{modmop}
M_2 = \mathbb{R}^3 \times \frac{M^{0}_{2}\times S^1}{\ZZ_2}.
\end{equation}
The $\mathbb{R}^3$- and $S^1$- factors describe the centre-of-mass motion of the two monopoles and carry flat metrics. The manifold  $M^{0}_{2}$ describes the interesting, relative motion of the two monopoles, and we refer to it as the Atiyah-Hitchin manifold in the following. However,  the reader should be aware that some authors reserve this name for the quotient $\MAH/\ZZ_2$. The manifold $\MAH$ is simply-connected and homotopic to a 2-sphere. The metric on $\MAH$ is also hyperk\"ahler. In four dimensions, the hyperk\"ahler property is equivalent to self-duality of the Riemann tensor so that the Atiyah-Hitchin manifold is an example of a gravitational instanton.   

The group $SO(3)$ of spatial rotations  is a symmetry group of YMH theory and acts isometrically on the Atiyah-Hitchin manifold. Therefore, it is convenient to parametrise the Atiyah-Hitchin manifold in terms of this $SO(3)$ action and one transverse radial or `shape' coordinate. The latter parametrises a 1-parameter family of two-monopole configurations which includes two well-separated monopoles where the shape parameter is simply the distance between the two mono\-poles. However, when the two monopoles get close, they deform each other until they coalesce to a doughnut-shaped configuration. The $SO(3)$ orbits are generically isomorphic to $SO(3)/\mathbb{Z}_2$, but the orbit of the doughnut-shaped configuration is exceptional and isomorphic to $S^2$, called the core in the following.

The metric on the Atiyah-Hitchin manifold $M^{0}_{2}$ is most conveniently written in terms of left-invariant $1$-forms $\sigma_1,\sigma_2$ and $\sigma_3$ on $SO(3)$, see \cite{GM} for details. Denoting the transverse coordinate by $r$, the metric takes the Bianchi IX  form 
\begin{equation}
ds^{2}=f^{2}dr^{2}+a^{2}\sigma^{2}_{1}+b^{2}\sigma^{2}_{2}+c^{2}\sigma^{2}_{3},
\end{equation}
where $f,\, a,\, b$ and $c$ are functions of $r$. The self-duality of the metric implies
\begin{equation}
\label{SDcond}
\frac{2bc}{f}\frac{da}{dr}=\left(b-c\right)^{2}-a^{2},
\end{equation}
and two other related equations obtained by cyclic permutation of $a,b,c$. The function $f$ can be chosen to fix the radial coordinate $r$. Following \cite{GM}, we pick
\begin{equation}
f=-\frac{b}{r}.
\end{equation} 
The initial conditions for the coefficient functions are 
\begin{equation}
a(\pi)=0, \quad b(\pi)=\pi, \quad c(\pi)=-\pi.
\end{equation}
The unique solution with these initial conditions can be written in terms of elliptic functions as follows. Let 
\begin{equation}
\label{rbeta}
r=2K\left(\sin \frac{\beta}{2}\right), \qquad  0\leq \beta \leq \pi,
\end{equation}
where  $K$ is the elliptic integral
\begin{equation}
K(k) =\int_0^{\frac{\pi}{2}} \frac{ d\tau }  {\sqrt{ 1-k^2\sin^2 \tau}}.
\end{equation}
With 
\begin{equation}
\label{abcw}
w_1=bc,\quad w_2 =ca, \quad w_3=ab,
\end{equation}
the solution is then given by
\begin{align}
\label{ws}
w_1 (r)&=-\sin\beta \, r\frac{dr}{d\beta}-\frac{1}{2}(1+ \cos\beta) r^2,\nonumber \\
w_2 (r)&=-\sin\beta \, r\frac{dr}{d\beta},\nonumber \\
w_3 (r)&= -\sin\beta \, r\frac{dr}{d\beta}+\frac{1}{2}(1- \cos\beta) r^2.
\end{align}

It turns out that $b>a>0$ away from the core, and that $c$ is negative (in fact,  $c<-2$).  
Defining a proper radial distance coordinate $R$ via 
\begin{equation}
\label{radial}
R(r)=\int_\pi^r-f(\rho) \,d\rho = \int_\pi^r \frac{b(\rho)}{\rho} \,d\rho, 
\end{equation}
 we have the following behaviour near the core   
 \bee
 R=  (r - \pi) +\operatorname{O}\left((r-\pi)^2\right). 
 \eee
This allows us to deduce expansions for the coefficient functions of the Atiyah-Hitchin metric \cite{GM,S}: 
\begin{equation}
\label{smallR}
a(r(R))=  2R  +\operatorname{O}\left( R^2 \right),\quad 
b(r(R))= \pi+\frac{1}{2}R +\operatorname{O}\left( R^2 \right), \quad 
c(r(R))= - \pi+\frac{1}{2}R+\operatorname{O}\left( R^2 \right).
\end{equation}

For large $r$, the coefficient functions $a,b $ and $c$ can be approximated by the functions $\atn,\btn $ and $\ctn$ given by 
\begin{align}
\label{TNapprox}
 \atn(r)=\btn(r)= r\sqrt{1-\frac{2}{r}}, \qquad \ctn(r)= -\frac{2}{\sqrt{1-\frac{2}{r}}}.
\end{align}
Theses functions are exact solutions of the self-duality equations \eqref{SDcond}, and give rise to another hyperk\"ahler  metric, called the Taub-NUT metric with negative `mass' parameter. This metric has $U(2)$ rather than $SO(3)$ symmetry. In the form given above, the metric is  degenerate at $r=2$ and changes signature form $(+,+,+,+)$ to $(-,-,-,-)$ as one crosses from $r>2$ to $r<2$. 
For later use, we note that, as explained in  \cite{GM}, it follows from \eqref{abcw} and \eqref{ws} that  
\begin{align}
\label{larger}
a(r) & = \atn(r) + \operatorname{O}\left( r^2e^{-r} \right), \nonumber \\
b(r) & = \btn(r) + \operatorname{O}\left( r^2e^{-r} \right), \nonumber \\
c(r) & = \ctn(r) + \operatorname{O}\left(e^{-2r} p(r) \right),
\end{align}
where $p$ is an algebraic function of $r$.

In our study of the  spectrum of the Laplace operator on the Atiyah-Hitchin manifold we need the asymptotics of $a,b$ and $c$ both as a function of $r$ and as a function of the proper radial distance $R$. Substituting the asymptotic expressions \eqref{TNapprox} into the definition \eqref{radial} one finds (see also \cite{S}) that      
\bee
\label{Rr}
R(r) = r + \ln r +  \operatorname{O}(1).
\eee

\subsection{The Laplace operator on the Atiyah-Hitchin manifold}

One may approximate the quantum mechanics of $k$ interacting monopoles  at low energy by solving the Schr\"odinger equation on the moduli space  of charge $k$ monopoles, taking the covariant Laplace operator associated to the Riemannian metric as the Hamiltonian. For details of this programme we refer the reader to \cite{GM},  where it is explained and applied to the asymptotic form of the manifold $\MAH$, and to \cite{S} where bound states and scattering states on $\MAH$ are discussed in detail, using a combination of numerical and semiclassical techniques.

The wave function for a two-monopole  quantum state is a $\CC$-valued function on the moduli space \eqref{modmop}. However, assuming without loss of generality that we work in the centre-of-mass frame of the two monopoles we can neglect the dependence on $\RR^3$. Introducing an angular coordinate $\chi\in[0,2\pi)$ on $S^1$, the Hamiltonian is then 
\begin{equation}
H=-\frac{\hbar^2}{16\pi} \frac{\partial^2}{\partial\chi^2} -\frac{\hbar^2}{4\pi}\Delta_{\text{\tiny AH}},
\end{equation}
where $\Delta_{\text{\tiny AH}}$ is the covariant Laplace operator on $\MAH$. It can be written in terms of the 
left-invariant (and right-generated) vector fields $\xi_1,\xi_2$ and $\xi_3$ on $SO(3)$ which are dual to the forms $\sigma_1,\sigma_2$ and $\sigma_3$ (see again \cite{GM} for details). Then
\begin{equation}
\Delta_{\text{\tiny AH}}=
\frac{1}{abcf}\frac{\partial}{\partial r}\left(\frac{abc}{f}\frac{\partial }{\partial r}\right)+\frac{\xi_{1}^{2}}{a^{2}}+\frac{\xi_{2}^{2}}{b^{2}}+\frac{\xi_{3}^{2}}{c^{2}}.
\end{equation}
Assuming without loss of generality a harmonic dependence of the wave function on the angular coordinate $\chi$,
the stationary Schr\"odinger equation is
\begin{equation}
H(e^{iS\chi}\Phi)=Ee^{iS\chi}\Phi
\end{equation}
for a function $\Phi: \MAH\rightarrow \CC$.  This is equivalent to 
\begin{equation}
\label{AHeigen}
-\frac{1}{abcf}\frac{\partial}{\partial r}\left(\frac{abc}{f}\frac{\partial\Phi}{\partial r}\right)+\left(\frac{\xi_{1}^{2}}{a^{2}}+\frac{\xi_{2}^{2}}{b^{2}}+\frac{\xi_{3}^{2}}{c^{2}}\right)\Phi=\epsilon\Phi,
\end{equation}
where $\epsilon=\frac{4\pi E}{\hbar^{2}}-\frac{S^2}{4}$. The quantum number $S$ is necessarily an integer and characterises the total electric charge of the quantum state \cite{GM}.

We will now derive and study spectral problems for functions on the half-line which can be obtained from \eqref{AHeigen} by separating the  dependence of the  function $\Phi$  on  the angular coordinates and the radial coordinate $r$. For details about the separation of variables we refer the reader to \cite{GM} and \cite{S}. Here we only give enough background to help the reader appreciate the interpretation of the bound states which we will encounter in terms of magnetic monopoles.

The vector fields $\xi_1,\xi_2$ and $\xi_3$ generate rotations of a two-monopole configuration about body-fixed orthogonal axes. The sum of squares $
 \xi^2_1+\xi^2_2+\xi^2_3 $ 
commutes with the Laplace operator on $\MAH$ and represents the total angular momentum of the two-monopole configuration. As usual in quantum theory, it has eigenvalues $-j(j+1)$, for an integer $j\geq 0$. 

The operator $\xi_3$ does not commute with the Laplace operators on $\MAH$, but does commute with its asymptotic form where $a,b,$ and $c$ are replaced by $\atn,\btn$ and $\ctn$. For well-separated monopoles, $\xi_3$ generates the rotation about the line joining the monopoles and an eigenvalue $s$ of $-i\xi_3$ characterises the relative electric charge of the two monopoles. The metamorphosis of body-fixed relative angular into electric charge is one of the interesting and subtle aspects of the theory of non-abelian monopoles. Again we refer the reader to \cite{GM}  for details. Finally, note that, as a consequence of the $\ZZ_2$-division in \eqref{modmop}, the relative electric charge $s$ and the total electric charge $S$ have to have an even sum. This essentially reflects the fact that the individual electric charges (only defined asymptotically) are both integers.

To study the spectrum of \eqref{AHeigen},  we separate variables in terms of  Wigner functions $D^{j}_{sm}$ for the dependence on $SO(3)$. Referring to \cite{GM,S} for details, the conservation of the total angular momentum but not of the relative electric charge means that one may  fix $j$ but needs to consider linear combinations of Wigner functions with all  allowed values of $s$,  with coefficient functions $u_{js}$ of the radial coordinate. This leads to systems of coupled ordinary differential equations, increasing in size with $j$, whose structure is described in \cite{S}. It turns out that, because of parity considerations, only a single radial equation needs to be considered for $j=0$ (where necessarily $s=0$) and also for $(j,s)= (1,1), (2,1)$ or $(3,2)$. In the case $j=0$ there are no bound states (albeit very interesting scattering, see \cite{S}), but the other three single channels support bound states, which we now discuss.
The radial equation takes the form 
\begin{equation}
-\frac{1}{abcf}\frac{d}{dr}\left(\frac{abc}{f}\frac{du_{js}}{dr}\right)+V_{js}u_{js}=\epsilon u_{js},
\label{lmnHami}
\end{equation}
where the potentials $V_{js}$ are given in Table \ref{potentialsjs}. 
\begin{table}[H]
\centering
\begin{tabular}{l|l}
\thead{$(j,s)$} & \thead{$V_{js}$} \\
\hline
$(1,1)$ & $\frac{1}{b^{2}}+\frac{1}{c^{2}}$ \\
$(2,1)$ & $\frac{4}{a^{2}}+\frac{1}{b^{2}}+\frac{1}{c^{2}}$\\
$(3,2)$ & $\frac{4}{a^{2}}+\frac{4}{b^{2}}+\frac{4}{c^{2}}$\\
\end{tabular}
\caption{The potentials for the three single channels with bound states in terms of $a$,$b$ and $c$.} 
\label{potentialsjs}
\end{table}
 
Bound state energies for the channels $(j,s) = (1,1)$ and  $(j,s) = (2,1)$ were computed numerically in \cite{M} using a shooting method applied to  \eqref{lmnHami}. The bound states in the channel  $(j,s) = (3,2)$ were missed in \cite{M} but pointed out in \cite{BJST}, where their bound state energies were also computed using a shooting method. The results in \cite{Manton,BJST} are not rigorous. Our goal is to use our results from Sect.~2 to prove that infinitely many bounds states do indeed exist in each of these three channels, and to provide lower bounds for the eigenvalues.

In order to apply the results from Sect.~2, we need to `flatten' the radial derivative using equation \eqref{radlap}.
To do this, we first change coordinates to the proper radial distance coordinate $R$, defined in \eqref{radial}, which satisfies $dR= -fdr$.
With 
\begin{equation}
\nu=\sqrt{-abc},
\end{equation}
the radial derivative becomes
\begin{equation}
-\frac{1}{abcf}\frac{d}{dr}\frac{abc}{f}\frac{d}{dr}=-\frac{1}{\nu^{2}}\frac{d}{dR}\nu^{2}\frac{d}{dR}.
\end{equation}
Defining
\begin{equation}
\eta=\nu u
\end{equation}
and  substituting $u=\frac{\eta}{\nu}$ into \eqref{lmnHami} we obtain
\begin{equation}
\label{Rsystem}
H_{js}\eta=\epsilon\eta,
\end{equation}
where
\begin{equation}
\label{Hjsdef}
H_{js}=-\frac{d^{2}}{dR^{2}}+V^{\text{\tiny eff}}_{js},\qquad 
V^{\text{\tiny eff}}_{js}=\frac{1}{\nu}\frac{d^{2}\nu}{dR^{2}}+V_{js}.
\end{equation}
This is the promised reduction of the Atiyah-Hitchin Laplacian to a standard Schr\"odinger operator on the half-line. 

\subsection{Bound states of the Atiyah-Hitchin Laplacian}
\label{AHlapbound}

The Sturm-Liouville problem \eqref{Rsystem} has the form required to apply the result of Sect.~2. The potentials  $V^{\text{\tiny eff}}_{js}$ are analytic on $(0,\infty)$ since they are implicitly defined in terms of elliptic functions.  With $R$ related to $r$ via an integral, and $a,b,c$ being determined in terms of $r$ via the relations \eqref{rbeta}, \eqref{abcw} and \eqref{ws}, we have not been able to express $V^{\text{\tiny eff}}_{js}$ in terms of $R$ explicitly. However, we can determine the asymptotic information near $R=0$ and $R=\infty$ required to establish the existence of infinitely many bound states with the result of Sect.~2, and to give numerical estimates for the eigenvalues.

Near the core of the Atiyah-Hitchin manifold, we can use \eqref{smallR} to determine leading terms in $V^{\text{\tiny eff}}_{js}$ in the limit $R\rightarrow 0$. We find $ \nu=\sqrt{R} + \operatorname{O}\left( R \right)$
and therefore 
\begin{equation}
\frac{1}{\nu}\frac{d^{2}\nu }{dR^{2}} = -\frac{1}{4R^{2}}+ \operatorname{O}\left( 1 \right).
\end{equation}
Collecting leading terms in $V^{\text{\tiny eff}}_{js} $ for $R\rightarrow 0$, we arrive at the second column of Table \ref{asypots}.

For large $r$, we need to combine the behaviour of the Atiyah-Hitchin metric coefficients with respect to $r$ given in \eqref{larger} with the relation \eqref{Rr} between $r$ and the proper radial distance $R$ to derive the asymptotic behaviour of $V^{\text{\tiny eff}}_{js}(R)$. The basic method is to compute asymptotics with respect to $r$ and then deduce from \eqref{Rr} that, for example, 
\bee
\lim_{R\rightarrow \infty} \frac{R}{r^2} =0,
\eee
and so, by definition, 
\bee
\frac{1}{r^2} = \operatorname{o}\left(\frac 1 R\right) \quad \text{for} \quad R\rightarrow \infty.
\eee
One then  finds, for example,
 \bee
 \frac {1} {\atn^2} = \frac {1}{ r^2} + \operatorname{O}  \left( \frac {1} {r^3} \right), \qquad 
 \frac{1}{\ctn^2} = \frac 1 4 - \frac {1}{2r},
 \eee
and therefore in particular
 \bee
 \frac {1} {a^2} =\operatorname{o} \left( \frac {1} {R} \right), \quad  \frac{1}{c^2}= \frac 1 4 - \frac {1}{2R} + \operatorname{o} \left( \frac {1} {R} \right).
 \eee
Similarly,
 \begin{equation}
\nu = \sqrt{2}r -1-\frac{3}{4r} + \operatorname{O}\left( \frac {1} {r ^2} \right),   
\end{equation}
implies 
\bee
\frac{1}{\nu}\frac{d^{2}}{dR^{2}}\nu =  \operatorname{o}\left( \frac1R \right).
\eee
We collect the resulting asymptotic terms in the potentials $V^{\text{\tiny eff}}_{js} $ in the third column of Table \ref{asypots}.   
\begin{table}[H]
\centering
\begin{tabular}{l|l|l|lll}
\thead{$(j,s)$} & \thead{$V^{\text{\tiny eff}}_{js} $ near $R=0$ } & \thead{ $ V^{\text{\tiny eff}}_{js}$ for $R\rightarrow \infty$} & \thead{$c_2$}  & \thead{$C_0$}   & \thead{$C_1$}  \\ \hline 
(1,1) & $-\frac{1}{4R^{2}}+ \frac{2}{\pi^{2}} + \operatorname{O}\left( R \right)  $ &  $\frac{1}{4}-\frac{1}{2R}+ \operatorname{o}\left(\frac{1}{R} \right) $ & $-\frac{1}{4}$ &   $\frac{1}{4}$ & $-\frac{1}{2}$ \\ 
 $(2,1)$ & $\phantom{-}\frac{3}{4R^{2}}+\frac{2}{\pi^{2}}+\operatorname{O}\left( R \right) $ & $\frac{1}{4}-\frac{1}{2R}+  \operatorname{o}\left(\frac{1}{R} \right) $ & $\phantom{-} \frac{3}{4}$   &  $\frac{1}{4}$ & $-\frac{1}{2}$ \\
 $(3,2)$  & $\phantom{-}\frac{3}{4 R^{2}}+\frac{8}{\pi^{2}}+\operatorname{O}\left( R \right)$ & $1-\frac{2}{R}+\operatorname{o}\left(\frac{1}{R} \right) $ & $\phantom{-} \frac{3}{4}$  & $1$ & $-2$  \\
\end{tabular}
\caption{The asymptotic forms of the potentials $V^{\text{\tiny eff}}_{js}$ for small and large $R$.}
\label{asypots}
\end{table}

By virtue of the results of Sect.~\ref{generalresults}, we arrive at the following.

\begin{corollary}
The radial Hamiltonians $H_{js}:\mathcal{D}\longrightarrow L^2(0,\infty)$ defined in \eqref{Hjsdef}
are selfadjoint. Their essential spectrum is the segment $[C_0,\infty)$,  where $C_0$ is given in Table~\ref{asypots}. Each of these Hamiltonians has infinitely many eigenvalues below $C_0$.
\label{coro4}
\end{corollary}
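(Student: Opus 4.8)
The plan is to treat Corollary~\ref{coro4} as an application of the general machinery of Sect.~\ref{generalresults} to the three explicit channels, reading off the relevant constants from Table~\ref{asypots}. For each $(j,s)\in\{(1,1),(2,1),(3,2)\}$ I would write $V^{\text{\tiny eff}}_{js}(R)=\frac{c_2}{R^2}+W(R)$ with $c_2$ as in the table, and verify in turn the hypotheses of Assumption~\ref{condpot} and the extra tail condition of Theorem~\ref{lemma3}. The bound $c_2\geq-\tfrac14$ holds in all three cases, since $c_2=-\tfrac14,\tfrac34,\tfrac34$; equivalently $m=0,1,1$ in \eqref{defc1}, so only the channel $(1,1)$ lives in the limit-circle regime and genuinely depends on the selfadjoint extension fixed via the domain $\mathcal{D}$ in \eqref{domdef}, while for $(2,1)$ and $(3,2)$ the minimal operator is already essentially selfadjoint.

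To establish that $W=V^{\text{\tiny eff}}_{js}-\frac{c_2}{R^2}$ is continuous and bounded on $[0,\infty)$ I would use the analyticity of $V^{\text{\tiny eff}}_{js}$ on $(0,\infty)$ together with the endpoint expansions of Table~\ref{asypots}: the small-$R$ column shows that the only singular term as $R\to0$ is exactly $\frac{c_2}{R^2}$, so $W$ extends continuously to $R=0$, and the large-$R$ column gives $\lim_{R\to\infty}W=C_0$, which is precisely condition~\ref{cond1}. A function continuous on $(0,\infty)$ with finite limits at both ends is bounded. For condition~\ref{cond2} I would read off $V^{\text{\tiny eff}}_{js}=C_0+\frac{C_1}{R}+\operatorname{o}\!\left(\tfrac1R\right)$, whence $W-C_0=\operatorname{O}\!\left(\tfrac1R\right)$ and $(W-C_0)^2=\operatorname{O}\!\left(\tfrac1{R^2}\right)$ is integrable on $[1,\infty)$.

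Once Assumption~\ref{condpot} is verified, selfadjointness of $H_{js}$ on $\mathcal{D}$, semi-boundedness below, and the identification $\sigma_{\mathrm{ess}}(H_{js})=[C_0,\infty)$ are immediate from the general statement recorded after Definition~\ref{extdef}, which rests on the relative compactness of $W-C_0$ proved in Lemma~\ref{lemma2} and on \cite[Cor.~2 to Thm~XIII.14]{RS4}. Finally, the tail constants $C_1=-\tfrac12,-\tfrac12,-2$ are all strictly negative, so Theorem~\ref{lemma3} applies directly and produces infinitely many eigenvalues below $C_0$ in each of the three channels.

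The genuine difficulty is not this verification but the justification of the asymptotic data in Table~\ref{asypots} to the precision the hypotheses require, and I expect this to be the main obstacle. Since $V^{\text{\tiny eff}}_{js}$ is only defined implicitly --- through the elliptic-function solution \eqref{ws} for $a,b,c$ and the integral \eqref{radial} relating $R$ and $r$, with no closed form for the map $R\mapsto V^{\text{\tiny eff}}_{js}$ --- the expansions must be produced by composing the $r$-asymptotics \eqref{larger} of the metric coefficients and of $\nu$ with the inversion \eqref{Rr}. The two delicate points are (i) showing that the curvature term $\frac1\nu\frac{d^2\nu}{dR^2}$ is $\operatorname{o}\!\left(\tfrac1R\right)$ and so does not contribute to $C_1$, which requires controlling second $R$-derivatives through the nonlinear change of variables, and (ii) upgrading the pointwise endpoint expansions to the global boundedness of $W$, ruling out interior blow-up via analyticity on $(0,\infty)$.
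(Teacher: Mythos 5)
Your proposal is correct and follows essentially the same route as the paper: Corollary~\ref{coro4} is proved there precisely by reading off $c_2$, $C_0$, $C_1$ from Table~\ref{asypots} (derived in Sect.~\ref{AHlapbound} by composing the asymptotics \eqref{smallR} and \eqref{larger} with \eqref{radial} and \eqref{Rr}) and invoking the selfadjointness statement following Definition~\ref{extdef} together with Theorem~\ref{lemma3}. Your explicit verification of Assumption~\ref{condpot} (boundedness of $W$ via analyticity plus finite endpoint limits, square-integrability of $W-C_0$ from the $\operatorname{O}(1/R)$ tail), your observation that $m=0,1,1$ so that only the $(1,1)$ channel is in the limit-circle regime, and the two ``delicate points'' you flag about justifying Table~\ref{asypots} simply make explicit steps the paper treats briefly --- indeed the paper carries out exactly those asymptotic computations, at the same level of rigour, immediately before stating the corollary.
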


\subsection{Numerical approximation of the eigenvalues} \label{AHnum}

Having established the existence of infinitely many eigenvalues, we would like to produce numerical approximations for them, in analogy with our treatment of the linearised YMH equations in Sect.~2. There we were able to exploit the asymptotic agreement between the radial  YMH Hamiltonian \eqref{pot3_24} and the radial Coulomb Hamiltonian. A natural exactly solvable approximation to the Laplace operator on the Atiyah-Hitchin manifold is provided by the Laplace operator on the (negative mass) Taub-NUT space.

 As explained in \cite{GM,S}, replacing the Atiyah-Hitchin radial functions $a,b,c$ by their Taub-NUT counterparts \eqref{TNapprox} (with $f=-b/r$ similarly replaced) in the eigenvalue equation \eqref{AHeigen}, and separating variables leads to an exactly solvable radial problem on the half-line $(0,\infty)$. The additional $U(1)$ symmetry of the Taub-NUT metric means that $\xi_3$ commutes with the Laplace operator, so that separating variables into Wigner functions of the angular coordinates and a radial function $u_{js}$, yields decoupled radial equations. Writing $u_{js}(r) = h_{js}(r)/r$, the radial equations derived in \cite{GM} are 
\begin{equation}
\left( \frac{d^2}{dr^2}-\frac{j(j+1)}{r^2}-\frac{s^2}{4}\left(1-\frac{4}{r}\right) + \epsilon\left(1-\frac{2}{r}\right)\right)h_{js}(r)=0.
\end{equation}
Remarkably, the singularity in Taub-NUT at $r=2$ is not visible in this radial equation, which can be solved exactly in terms of confluent hypergeometric functions. The relevant eigenvalues are  
\begin{equation}
\label{TNenergy}
\epsilon_{(s,n)} = \frac 1 2 \sqrt{n^2-s^2}\left(n-\sqrt{n^2-s^2}\right),\qquad |s| \leq j, \qquad  n=j+1,j+2 \ldots .
\end{equation}

One might think that the exact solutions of the Taub-NUT radial equation could be used as trial wavefunctions for the radial Atiyah-Hitchin equations \eqref{Rsystem}, in analogy with our use of the Coulomb wavefunctions in the YMH radial problem. However, there are theoretical and practical problems to overcome. The Atiyah-Hitchin and Taub-NUT manifolds are different manifolds (even topologically), and identifying radial coordinates on the two spaces is arbitrary. Pragmatically, one might identify, for example, the proper radial distance coordinate $R$ on the Atiyah-Hitchin space with the radial coordinate $r$ on the Taub-NUT space because they have the same range, and the Taub-NUT problem is most easily solved in terms of $r$. However, even with this choice, the numerical computation of the potential $V^{\text{\tiny eff}}_{js}$ in \eqref{Rsystem} as a function of $R$ with control over numerical errors is very difficult because it involves, amongst others, the inversion of the  elliptic function arising in  \eqref{rbeta}. We were therefore not able to construct useful trial wavefunctions for the Atiyah-Hitchin Laplacian from the Taub-NUT eigenfunctions by following this idea. Instead we consider a more pedestrian approach.

\begin{table}[H]
\centering
\caption{The computed eigenvalues of $H_{js}$ for three channels.}
\label{tab:evalsAHpots}
\centering
\begin{tabular}{c|c|c}
\thead{$(j,s)= (1,1)$} & \thead{$(j,s)=(2,1)$} & \thead{$(j,s)=(3,2)$} \\ \hline
0.23151604 & 0.24264773 & 0.92838765 \\
0.24250546 & 0.24597017 & 0.95655735 \\
0.24605425 & 0.24745446 & 0.97063593 \\
0.24898588 & 0.24836885 & 0.97876253 \\
           & 0.24942162 & 0.98390049 \\
           &            & 0.98736886 \\
\end{tabular}
\end{table}

In Table~\ref{tab:evalsAHpots} we show the  numerically computed  lowest four to six eigenvalues in each of the channels listed by means of the Matlab routine Chebfun, \cite{CF}. These calculations are expected to be more accurate than those in \cite{M,BJST}. We use the radial Atiyah-Hitchin Hamiltonian in the form \eqref{lmnHami}. The equations \eqref{SDcond} for the coefficient functions (with $f=-b/r$) are used to express derivatives of $a,b$ and $c$, in terms of $a,b$ and $c$. For each integration of \eqref{lmnHami}, the coefficient functions $a,b$ and $c$ are obtained from \eqref{SDcond}, starting with initial data $a=2h$, $b=\pi+h$ and $c=-\pi+h$ where $h=0.001$.

We have listed the corresponding Taub-NUT eigenvalues from \eqref{TNenergy} in Table~\ref{tab:TNapp}. For the channel $(j,s)=(1,1)$, the lowest four energies occur when $n=2,3,4,5$, for $(j,s)=(2,1)$ they occur when $n=3,4,5,6,7$ and for $(j,s)=(3,2)$ they occur when $n=4,5,6,7, 8,9$. Our calculations confirm the remarkable agreement between numerically computed eigenvalues for the Atiyah-Hitchin Hamiltonian and the Taub-NUT eigenvalues. This agreement was pointed out and discussed in \cite{M} for the three lowest lying eigenvalue in the channels $(j,s)=(1,1)$ and $(j,s)=(2,1)$, and also in \cite{BJST} for $(j,s)=(3,2)$.

Note that the Taub-NUT approximation is slightly below our numerically computed eigenvalues for the Atiyah-Hitchin Laplacian   for all but two (the lowest when $(j,s)= (1,1)$ and $n=2,3$).

\begin{table}[H]
\caption{The Taub-NUT approximation to the first eigenvalues from \eqref{TNenergy} and agreement with the eigenvalues from Table~\ref{tab:evalsAHpots}. The signs displayed correspond to whether the Taub-NUT eigenvalue is above (+) or below (-) the Atiyah-Hitchin eigenvalue.}
\label{tab:TNapp}
\centering
\begin{tabular}{cl|cl|cl}
\thead{$(j,s)=(1,1)$}& Agreement  & \thead{$(j,s)=(2,1)$} & Agreement & \thead{$(j,s)=(3,2)$} & Agreement \\ \hline
 0.23205081 &+ 99.77\%& 0.24264069  &- 99.99(7)\%& 0.92820323&- 99.98\% \\
 0.24264069 &+ 99.94\%& 0.24596669  &- 99.99(9)\%& 0.95643924&- 99.99\% \\ 
 0.24596669 &- 99.96\%& 0.24744871  &- 99.99(7)\%& 0.97056275 &- 99.99\% \\
 0.24744871 &- 99.38\%& 0.24823935  &- 99.95\%& 0.97871376 &- 99.99(5)\% \\
	    &     &  0.24744871     &- 99.20\%& 0.98386677      &- 99.99(7)\% \\
	    &	& &	  & 0.98733975             &- 99.99(7)\% \\
\end{tabular}

\end{table}

\section{Conclusion}

In this paper we rigorously established the existence of infinitely many eigenstates of operators  which arise in linearisations of the $SU(2)$ YMH equations in the BPS limit and  in symmetry reductions of the Laplace operator on the moduli space of two monopoles. We have also provided sharp numerical estimates of the eigenvalues in some cases. As promised in the Introduction, we would now like to look at the physical  interpretations of these eigenstates in YMH theory, with a particular emphasis on their significance for electric-magnetic duality conjectures. 

As discussed in Sect.~\ref{linearised}, suitable quantum superpositions of the  eigenstates  of the Schr\"odinger operator \eqref{pot3_24} define fluctuations around the BPS monopole of definite electric charge $n\in \ZZ$. The eigenstates we found therefore give rise to an infinite tower of Coulombic bound states in each of the dyonic sectors $(1,n)$, $n \in \ZZ$. These Coulombic bound states are covered by an essential spectrum arising from other sectors, see \cite{RS} for details.

The eigenstates of the Atiyah-Hitchin Laplacian discussed in Sects.~\ref{AHlapbound} and ~\ref{AHnum} describe quantum states of magnetic monopoles of charge $m=2$ and relative electric charge $s=1$ (with angular momentum $j=1$ or $j=2$) or $s=2$ (with angular momentum $j=3$). The total electric charge has to be  odd  when $s=1$ and even when $s=2$, but is otherwise arbitrary. We thus have two  families of dyonic sectors, with each sector containing  infinitely many Coulombic bound states:  one family labelled by 
$(2,n)$, $n $ odd, and  one by  $(2,n)$ with $n$ even. The bound states are also covered by an essential spectrum arising from other channels.

The families of charge sectors $(1,n)$, $ n\in \ZZ$ and $(2,n)$, $n$ odd, have both featured prominently in studies of S-duality since they are related by the $SL(2,\ZZ)$ action reviewed in the Introduction. The fact that both contain so-called quantum BPS states was one of the first pieces of strong evidence for S-duality. In the language of this paper, BPS quantum states are bound states with energy equal to the lower bound of the essential spectrum.

Here, we did not consider BPS states, but we can indicate briefly how they fit into our discussion, referring to \cite{dVS} where the N=4 supersymmetric theory is studied in notation similar to the one used here. 
The BPS states in the $(1,n)$ sectors  correspond to zero-energy solutions of the coupled system \eqref{valpha} which, as already pointed out,  cannot be obtained via the transformation \eqref{changevar} used here. In the $(2,n)$ sectors ($n$ odd), the BPS state is the famous  Sen form. It is a zero-energy eigenstate of the  Atiyah-Hitchin  Laplacian  acting on differential forms. As explained in \cite{dVS}, all eigenfunctions (zero-forms) of the Laplace operator are part of a supersymmetry multiplet of differential eigenforms of the Laplacian.  However, the zero-energy eigenstates are special, and the corresponding supersymmetric multiplet does not contain ordinary functions on the Atiyah-Hitchin space. As a result, we do not see them in our analysis.  However,  Fig.~\ref{spectrum} illustrates the relation of the BPS state to the essential spectrum of Laplace operator acting on forms and to the eigenvalues studied here. 

\begin{figure}[ht]
\centering
\includegraphics[width=.6\textwidth]{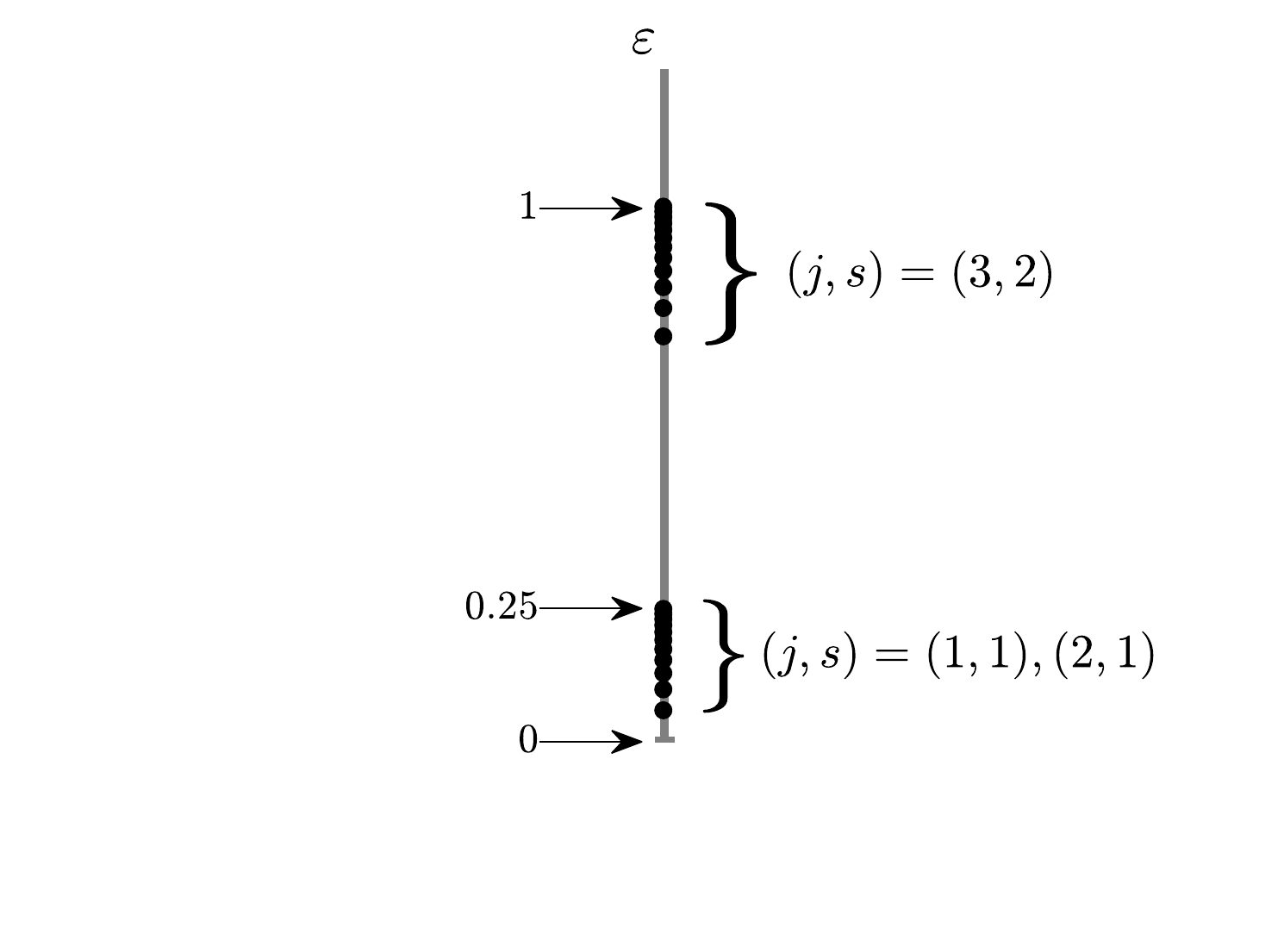}
 \caption{The qualitative nature of the spectrum of the Laplace operator (acting on differential forms) on the Atiyah-Hitchin manifold: the special eigenvalue $\varepsilon=0$ for the Sen (or BPS) state is also the lower bound of the essential spectrum. The Coulombic bound states studied here are all embedded in the essential spectrum. The spectrum of the linearised YMH operator has the same qualitative form. }
 \label{spectrum}
\end{figure}

The Coulombic families of bound states we found both  for $(1,n)$, $ n\in \ZZ$ and $(2,n)$, $n$ odd, provide further evidence for  the similarities between these two families of charge sectors, and possibly further evidence for S-duality. The latter would not require the spectra in these sectors to be equal, since it also involves a change in the Yang-Mills coupling constant. It would, however, suggest that the tower of Coulombic states found for the linearised YMH equations and for the $s=1$ channels of the Atiyah-Hitchin Laplacian represent different approximations, valid for different values of the Yang-Mills coupling constant, to the same physical system of bound states. 

The Coulombic bound states in the $(j,s)=(3,2)$ channel of the Atiyah-Hitchin Laplacian are physically the most surprising of the bound states studied here. They describe bound states of dyons with charges $(1,1)$ and $(1,-1)$ and were overlooked in \cite{M} since one might expect dyons of equal and opposite electric charges to exchange their electric charge and turn into pure monopoles. As pointed out in \cite{BJST}, this does not happen because the bosonic nature of pure monopoles does not allow them to be in a state of orbital angular momentum $j=3$. Applying S-duality to these states  leads to  surprising  predictions. A pair of dyons with charges $(1,1)$ and $(1,-1)$ is S-dual to a pair with charges $(-1,1)$ and $(1,1)$.
The Coulombic bound states we found would therefore be dual to bound states in a system consisting of a monopole and an anti-monopole, both carrying one unit of electric charge (breather states).

To end, we remark that the study of  bound states in single channels arising in the linearised YMH equations and the Atiyah-Hitchin Laplacian are merely the first steps in a full exploration of these spectral problems. Both contain interesting scattering processes, with strikingly similar  qualitative features  \cite{S,RS}. For higher angular momenta, both yield infinitely many  multi-channel problems, with both bound states and scattering processes. Our analysis suggests that all of these warrant careful further study.

\vspace{1cm}

\noindent {\bf Acknowledgements.} \,
Funding for the research reported in this paper was provided by EPSRC grants EP/K00848X/1 and EP/L504774/1.

\appendix
\section{Detailed evaluation of the stiffness matrices for the linearised YMH equations} \label{massandstiffness}

To compute the stiffness matrix in \eqref{stifma} we first split the Hamiltonian $\HYMH$ into two parts   
\begin{equation}
\HYMH=H_{1}+H_{2},
\end{equation}
where
\begin{align}
H_{1}&=-\frac{d^{2}}{dr^{2}}+\frac{2}{r^{2}}-\frac{2}{r}, \\
H_{2}&=\frac{1+\cosh^{2}(r)}{\sinh^{2}(r)}+\frac{2}{r}(1-\coth(r)).
\end{align}
Now defining   
\begin{align}
S_{k}(i,j)&=\int^{\infty}_{0}H_{k}\tilde{b_{i}}(r)\tilde{b_{j}}(r)dr, \qquad k=1,2,
\end{align}
one finds that 
$S_{1}(i,j)$ is  straightforward to calculate, but that the evaluation of  $S_{2}(i,j)$ is rather involved. To  organise  it, 
we write  the basis functions  \eqref{eqforbtilde} as
\begin{equation}
\tilde{b_{i}}(r)=r\left(\sum^{i}_{p=1}d^{i+1}_{p-1}r^{p}\right)e^{-\left(\frac{1}{i+1}\right)r},
\end{equation}
and introduce the notation  $\hat{d}^{ij}_p\in \mathbb{R}$  for the coefficients of the product 
\[
      \tilde{b_{i}}(r)\tilde{b_{j}}(r)=r^2\left(\sum^{i+j}_{p=2}\hat{d}^{ij}_{p}r^{p}\right)e^{-\left(\frac{1}{i+1}+\frac{1}{j+1}\right)r}.
\] 
Then
\begin{align}
S_{2}(i,j)&=\int^{\infty}_{0} \frac{  r^{2} \left(1+\cosh^{2}(r)+\frac{2}{r}\sinh^{2}(r)-\frac{2}{r}\sinh(r)\cosh(r)\right)}{\sinh^{2}(r)}  \left(\sum^{i+j}_{p=2}\hat{d}^{ij}_{p}r^{p}\right)e^{-\left(\frac{1}{i+1}+\frac{1}{j+1}\right)r}dr\nonumber\\
&=\sum^{i+j}_{p=2}\hat{d}^{ij}_{p}\int^{\infty}_{0}\frac{r^{2}}{\sinh^{2}(r)}\left(\frac{3}{2}+\frac{e^{2r}}{4}+\frac{e^{-2r}}{4}-\frac{1}{r}+\frac{e^{-2r}}{r}\right)r^{p}e^{-\left(\frac{1}{i+1}+\frac{1}{j+1}\right)r}dr\nonumber\\
&=\sum^{i+j}_{p=2}\hat{d}^{ij}_{p}\left(\frac{3}{2}\mathscr{L}\left(\frac{r^{2+p}}{\sinh^{2}(r)}\right)\left(\frac{1}{i+1}+\frac{1}{j+1}\right)+\frac{1}{4}\mathscr{L}\left(\frac{r^{2+p}}{\sinh^{2}(r)}\right)\left(\frac{1}{i+1}+\frac{1}{j+1}-2\right)\right.\nonumber\\ 
&+\left. \frac{1}{4} \mathscr{L}\left(\frac{r^{2+p}}{\sinh^{2}(r)}\right)\left(\frac{1}{i+1}+\frac{1}{j+1}+2\right)-2
\mathscr{L}\left(\frac{r^{1+p}}{\sinh (r)}\right)
\left(\frac{1}{i+1}+\frac{1}{j+1}+1\right)\right),
\label{eq:S1matrix}
\end{align}
where $\mathscr{L}(f(t))(s)$ is the Laplace transform of $f(t)$ and we  used  the exponential definitions of the hyperbolic functions. Using \cite[25.11.25]{NIST} we see that
\begin{align}
\mathscr{L}\left(\frac{r}{\sinh (r)}\right)(s)&=\int^{\infty}_{0}\frac{2e^{-s r}r}{e^{r}-e^{-r}}dr\nonumber\\
&=\frac{1}{2}\int^{\infty}_{0}\frac{e^{-\frac{s x}{2}}x}{e^{\frac{x}{2}}-e^{-\frac{x}{2}}}dx\nonumber\\
&=\frac{1}{2}\int^{\infty}_{0}\frac{e^{-\left(\frac{s+1}{2}\right)x}x}{1-e^{-x}}dx\nonumber\\
&=\frac{1}{2}\Gamma(2)\xi\left(2,\frac{s+1}{2}\right),\nonumber
\label{rsinhGamma}
\end{align}
where $\Gamma(t)$ is the Gamma function , $\xi(q,w)$ is the Hurwitz zeta function and we have made the substitution $2r=x$. This evaluates the last term in \eqref{eq:S1matrix}. To simplify the other terms we start by defining the function
\begin{equation}
K(a,s)=\mathscr{L}\left(\frac{r^{a}}{\sinh^{2}(r)}\right)(s),\hspace{3mm}a=2,3,\ldots .
\end{equation}
Taking the derivative of $K(2,s)$ with respect to $s$ we have
\begin{align}
\partial_{s}K(2,s)&=\partial_{s}\mathscr{L}\left(\frac{r^{2}}{\sinh^{2}(r)}\right)(s)\nonumber\\
&=\int^{\infty}_{0}\frac{r^{2}}{\sinh^{2}(r)}\partial_{s}e^{-sr}dr\nonumber\\
&=-\int^{\infty}_{0}\frac{r^{3}}{\sinh^{2}(r)}e^{-sr}dr\nonumber \\
&=-K(3,s).
\end{align}
By induction we have the relation
\begin{equation}
K(a+1,s)=(-\partial_{s})^{(a-1)}K(2,s),\hspace{3mm}a=2,3,... .
\label{Kinducrel}
\end{equation}
Using \cite[25.11.25, 25.11.12]{NIST} we can write
\begin{align}
K(2,s)&=\mathscr{L}\left(\frac{r^{2}}{\sinh^{2}(r)}\right)(s)\nonumber\\
&=\int^{\infty}_{0}\frac{4r^{2}e^{-sr}}{(e^{r}-e^{-r})^{2}}dr\nonumber\\
&=\frac{1}{2}\int^{\infty}_{0}\frac{x^{2}e^{-\left(\frac{s+2}{2}\right)x}}{(1-e^{-x})^{2}}dx\nonumber\\
&=\frac{1}{2}\int^{\infty}_{0}e^{-\frac{sx}{2}}x^{2}\frac{d}{dx}\left(\frac{-1}{1-e^{-x}}\right)dx\nonumber\\
&=\int^{\infty}_{0}\frac{xe^{-\frac{sx}{2}}}{1-e^{-x}}dx-\frac{s}{4}\int^{\infty}_{0}\frac{x^{2}e^{-\frac{sx}{2}}}{1-e^{-x}}dx\nonumber\\
&=\Gamma(2)\xi\left(2,\frac{s}{2}\right)-\frac{2}{4}\Gamma(3)\xi\left(3,\frac{s}{2}\right)\nonumber\\
&=\Phi'\left(\frac{s}{2}\right)+\frac{s}{4}\Phi''\left(\frac{s}{2}\right),
\label{K2sdigamma}
\end{align}
where $\Phi(z)$ is the Digamma function and we have made the substitution $2r=x$. Substituting \eqref{K2sdigamma} into \eqref{Kinducrel} gives
\begin{equation}
K(a+1,s)=(-\partial_{s})^{(a+1)}\left(\Phi'\left(\frac{s}{2}\right)+\frac{s}{4}\Phi''\left(\frac{s}{2}\right)\right).
\label{Kinducdigamma}
\end{equation}
This can now be used to evaluate the expression for $S_{2}(i,j)$ in \eqref{eq:S1matrix} in terms of the Digamma function.


\begin{thebibliography}{10} 

\bibitem{AHbook}
M.~Atiyah and N.~Hitchin.
\newblock {\em The geometry and dynamics of magnetic monopoles}.
\newblock Princeton University Press, Princeton, NJ, 1988.

\bibitem{BT}
F.~Alexander Bais and W.~Troost.
\newblock Zero modes and bound states of the supersymmetric monopole.
\newblock {\em Nuclear Phys. B}, 178(1):125 -- 140, 1981.

\bibitem{BDG}
L.~Bruneau, J.~Derezi\'nski, and V.~Georgescu.
\newblock Homogeneous {S}chr\"odinger operators on half-line.
\newblock {\em Ann. Henri Poincar\'e}, 12(3):547--590, 2011.

\bibitem{EBD}
E.~B. Davies.
\newblock {\em Spectral theory and differential operators}.
\newblock Cambridge University Press, Cambridge, 1995.

\bibitem{2010Oliveira}
C.~R. de~Oliveira and A.~A. Verri.
\newblock Self-adjoint extensions of {C}oulomb systems in 1, 2 and 3
  dimensions.
\newblock {\em Ann. Physics}, 324(2):251--266, 2009.

\bibitem{dVS}
E.~J. de~Vries and B.~J. Schroers.
\newblock Supercharges, quantum states and angular momentum for {$N=4$}
  supersymmetric monopoles.
\newblock {\em Nuclear Phys. B}, 839(1-2):157--186, 2010.

\bibitem{CF}
T.~A.~Driscoll, N.~Hale and  L.~N.~Trefethen, editors.
\newblock {\em Chebfun Guide}.
\newblock Pafnuty Publications, Oxford, 2014.

\bibitem{2010Duclosetal}
P.~Duclos, P.~{\v S}tov{\'\i}{\v c}ek and M.~Tu{\v s}ek.
\newblock On the two-dimensional {C}oulomb-like potential with a central point
  interaction.
\newblock {\em J. Phys. A}, 43(47):474020--23, 2010.

\bibitem{FV}
P.~Forg\'acs and M.~S. Volkov.
\newblock Resonant excitations of the 't Hooft-Polyakov monopole.
\newblock {\em Phys. Rev. Lett.}, 92(15):151801--02, 2004.

\bibitem{GM}
G.~W. Gibbons and N.~S. Manton.
\newblock Classical and quantum dynamics of {BPS} monopoles.
\newblock {\em Nuclear Phys. B}, 274(1):183--224, 1986.

\bibitem{GTV}
D.~M. Gitman, I.~V. Tyutin and B.~L. Voronov.
\newblock {\em Self-adjoint extensions in quantum mechanics}.
\newblock Birkh\"auser/Springer, New York, 2012.

\bibitem{GS}
S.~J. Gustafson and I.~M. Sigal.
\newblock {\em Mathematical concepts of quantum mechanics}.
\newblock Springer, Heidelberg, 2011.

\bibitem{Manton}
N.~S. Manton.
\newblock A remark on the scattering of BPS monopoles.
\newblock {\em Physics Letters B}, 110(1):54 -- 56, 1982.

\bibitem{M}
N.~S. Manton.
\newblock Monopole and Skyrmion bound states.
\newblock {\em Phys. Lett. B}, 198(2):226--230, 1987.

\bibitem{MO}
C.~Montonen and D.~Olive.
\newblock Magnetic monopoles as gauge particles?
\newblock {\em Phys. Lett. B}, 72:117--120, 1977.

\bibitem{NIST}
F.~W.~J. Olver, D.~W. Lozier, R.~F. Boisvert and C.~W. Clark, editors.
\newblock {\em N{IST} handbook of mathematical functions}.
\newblock U.S. Department of Commerce, National Institute of Standards and
  Technology, Washington, DC; Cambridge University Press, Cambridge, 2010.

\bibitem{Osborn}
H.~Osborn.
\newblock Topological charges for $n = 4$ supersymmetric gauge theories and
  monopoles of spin 1.
\newblock {\em Phys. Lett. B}, 83(3):321 -- 326, 1979.

\bibitem{Polyakov}
A.~M. Polyakov.
\newblock Particle spectrum in quantum field theory.
\newblock {\em JETP Lett.}, 20:194--195, 1974.

\bibitem{RS2}
M.~Reed and B.~Simon.
\newblock {\em Methods of modern mathematical physics. {II}. {F}ourier
  analysis, self-adjointness.}
\newblock Academic Press, New York-London, 1975.

\bibitem{RS4}
M.~Reed and B.~Simon.
\newblock {\em Methods of modern mathematical physics. {IV}. Analysis of
  operators.}
\newblock Academic Press, New York-London, 1978.

\bibitem{RS}
K.~M. Russell and B.~J. Schroers.
\newblock Resonances and bound states of the 't Hooft-Polyakov monopole.
\newblock {\em Phys. Rev. D}, 83:065004, 2011.

\bibitem{S}
B.~J. Schroers.
\newblock Quantum scattering of {BPS} monopoles at low energy.
\newblock {\em Nuclear Phys. B}, 367(1):177--214, 1991.

\bibitem{BJST}
B.~J. Schroers.
\newblock {\em Quantised Soliton Interactions}.
\newblock PhD thesis, Emmanuel College, Cambridge, 1992.

\bibitem{Sen}
A.~Sen.
\newblock Dyon-monopole bound states, self-dual harmonic forms on the
  multi-monopole moduli space, and {${\rm SL}(2,{\bf Z})$} invariance in string
  theory.
\newblock {\em Phys. Lett. B}, 329(2-3):217--221, 1994.

\bibitem{tHooft}
G.~'t~Hooft.
\newblock Magnetic monopoles in unified gauge theories.
\newblock {\em Nuclear Phys. B}, 79:276--284, 1974.

\end{thebibliography}

\end{document}